	\def\equationautorefname~#1\null{Equation~(#1)\null}%
\newcommand{\msize}{2em}
\tikzstyle{redblock} = [draw, fill=red, rectangle, minimum width=\msize, minimum height=\msize]
\tikzstyle{blueblock} = [draw, fill=blue, rectangle, minimum width=\msize, minimum height=\msize]
\tikzstyle{grayblock} = [draw, fill=gray, rectangle, minimum width=\msize, minimum height=\msize]
\tikzstyle{whiteblock} = [draw, fill=white, rectangle, minimum width=\msize, minimum height=\msize]
\newcommand{\problemtitle}[1]{\gdef\@problemtitle{#1}}
\newcommand{\probleminput}[1]{\gdef\@probleminput{#1}}
\newcommand{\problemquestion}[1]{\gdef\@problemquestion{#1}}
  \par\addvspace{.5\baselineskip}
  \par\addvspace{.5\baselineskip}
\newtheorem{theorem}{Theorem}[section]
\newaliascnt{lemma}{theorem}
\newtheorem{lemma}[lemma]{Lemma}
\newaliascnt{proposition}{theorem}
\newtheorem{proposition}[proposition]{Proposition}
\newaliascnt{corollary}{theorem}
\newaliascnt{definition}{theorem}
\newtheorem{definition}[definition]{Definition}
\newcommand{\NP}{$\mathcal{NP}$}
\newcommand{\mbpt}{\textsc{$\epsilon$-Matrix \allowbreak Bipartition} }
\newcommand{\gb}{\textsc{$\epsilon$-Graph \allowbreak Bisection} }
\newcommand{\geb}{\textsc{$\epsilon$-Graph \allowbreak Edge-\allowbreak Bisection} }
\newcommand{\sgeb}{\textsc{GEB}_{\epsilon} }
\newcommand{\sgb}{\textsc{GB}_{\epsilon} }
\newsavebox{\measure@tikzpicture}
  \def\tikz@width{#1}%
  \def\tikzscale{1}\begin{lrbox}{\measure@tikzpicture}%
  \edef\tikzscale{\pgfmathresult}%
\definecolor{mgreen}{rgb}{0,0.5,0}
\begin{document}

	\title{An improved exact algorithm and an NP-completeness proof for sparse matrix bipartitioning}

	\author{Timon E. Knigge \and Rob H. Bisseling}

	\maketitle

	\begin{abstract}
		We investigate sparse matrix bipartitioning -- a problem where
		we minimize the communication volume in parallel sparse
		matrix-vector multiplication. We prove, by reduction from
		graph bisection, that this problem is \NP-complete
		in the case where each side of the bipartitioning must
		contain a linear fraction of the nonzeros.

		We present an improved exact branch-and-bound algorithm
		which finds the minimum communication volume for a given
		matrix and
		maximum allowed imbalance. The algorithm is based on a
		maximum-flow bound and a packing bound, which extend
		previous matching and packing bounds.

		We implemented the algorithm in a new program called MP
		(Matrix Partitioner), which solved 839 matrices from the
		SuiteSparse collection to optimality, each within 24 hours
		of CPU-time. Furthermore, MP solved the difficult problem
		of the matrix \texttt{cage6} in about 3 days. The new
		program is on average more than ten times faster than the
		previous program MondriaanOpt.
		
		Benchmark results using the set of 839 optimally solved matrices
		show that combining the medium-grain/iterative refinement
		methods of the Mondriaan package with the hypergraph
		bipartitioner of the PaToH package produces sparse matrix
		bipartitionings on average within 10\% of the optimal
		solution.
	\end{abstract}

	\section{Introduction}
\label{sec:intro}
Sparse matrix partitioning is important for the parallel solution of sparse linear systems by direct or iterative
methods. In iterative solvers, the basic kernel is the multiplication of a sparse matrix and a dense vector, the
SpMV operation. A good partitioning of the sparse matrix and the vector will balance the computation load in a
parallel SpMV by spreading the matrix nonzeros evenly over the parts assigned to the processors of the parallel
computer, while also leading to less communication of the vector components between the processors.

In the past decades, much effort has been spent on developing and improving heuristic partitioning methods. In
particular, hypergraph methods have been very successful because they model the communication volume (the total
number of data words sent) exactly. Two-dimensional (2D) partitioning methods are superior to 1D methods, since
they are more general and can split both the rows and columns of the matrix and hence in principle can provide
better solutions. Heuristic algorithms for hypergraph-based sparse matrix partitioning have been implemented in
the sequential software packages hMetis~\cite{karypis99b}, PaToH~\cite{catalyurek99},
Mondriaan~\cite{vastenhouw05}, KaHyPar~\cite{akhremtsev17}, and the parallel packages
Par$k$way~\cite{trifunovic08} and Zoltan~\cite{devine06}. The current state-of-the-art methods for 2D sparse
matrix partitioning are the fine-grain~\cite{catalyurek01} method and the medium-grain method~\cite{pelt14}. 

How good are the current methods and is it still worthwhile to improve them? To answer this question we need to
compare the quality of the outcome, i.e., the communication volume, to the optimal result. To enable such a
comparison, we need an exact algorithm that provides the minimum communication volume for a specfied maximum load
imbalance. The first exact algorithm for this problem (with two parts) was proposed by Pelt and
Bisseling~\cite{pelt15}, based on a branch-and-bound method.
This algorithm has been implemented in the program MondriaanOpt, included in the
Mondriaan package, version 4.2. As of today, 356 matrices from the SuiteSparse (formerly University of Florida)
sparse matrix collection~\cite{davis11} have been bipartitioned to optimality by
MondriaanOpt.\footnote{The solutions can be found at
\url{http://www.staff.science.uu.nl/~bisse101/Mondriaan/Opt/}.} Being able to increase the size of the
solution subset would be valuable for benchmarking heuristic partitioners, by providing more comparison data of
a more realistic size. Heuristic partitioners are aimed at large problems, although they will
encounter smaller problems after their inital splits.

Optimal partitionings are easiest to compute for splitting into two parts: the required computation time grows
quickly with a larger number of parts, as discussed by Pelt and Bisseling~\cite{pelt15}. Furthermore, heuristic
partitioners often are based on recursive bipartitioning, so that it is most important to gauge the quality of
the bipartitioner. (A notable exception is KaHyPar, which computes a direct $k$-way partitioning.) Therefore,
both the exact partitioner implemented in MondriaanOpt and the improved partioner  MP (for Matrix Partitioner)
presented in this article, compute optimal solutions for bipartitioning.  Another question that arises is about
the \NP-completeness~\cite{garey79} of the sparse matrix bipartitioning problem. It is known that the decision
problem of graph bipartitioning with a tolerated imbalance is \NP-complete~\cite[Theorem 3.1]{bui92} and so is
hypergraph partitioning~\cite[Chapter 6]{lengauer90},  but sparse matrix bipartitioning is a special case of
hypergraph bipartitioning (for instance, with vertices contained in only two hyperedges), and whether this problem
is \NP-Complete is still open.

The novelty of this paper is twofold: (i) we present an improvement of the previous state-of-the-art exact
algorithm~\cite{pelt15} by generalizing a matching-based lower bound on the necessary communication to a stronger
maximum flow-based bound, and by generalizing a packing bound (using ideas found by
Delling~\textit{et al.\ }\cite{delling14}); (ii) we formalize sparse matrix bipartitioning as a decision problem
and prove that it is \NP-complete.

The matrix bipartitioning problem that we solve by an exact algorithm can be formulated as follows. Given
an $m \times n$ sparse matrix with $|A|$ nonzeros and an allowed imbalance fraction of $\epsilon \geq 0$, find
disjoint subsets $A_1, A_2 \subseteq A$ such that
\begin{equation}
	A=A_1 \cup A_2,
\end{equation}
and
\begin{equation}
\label{eqn:imbal}
	|A_i| \leq (1+\epsilon ) \left\lceil \frac{|A|}{2} \right\rceil , ~\mathrm{for}~ i=1,2,
\end{equation}
and such that the communication volume $VOL(A_1,A_2)$ is minimal.

Here, the \emph{communication volume} is defined as the total number of rows and columns  that have nonzeros
in both subsets. Each of these \emph{cut} rows and columns gives rise to one communication of a single scalar
in a parallel SpMV.
\autoref{eqn:imbal} represents a constraint on the load balance of two processors of a parallel computer
executing the SpMV.

In this paper, we will only consider the communication volume as the metric to be minimized. Note that other
possible objectives, such as minimizing the maximum communication volume per processor or minimizing the total
number of messages, may also be relevant. For bipartitioning, however, these metrics need not specifically be
optimized: for two processors, the volume per processor is just half the total volume, if we count sending and
receiving data as equally important; furthermore, the total number of messages sent is at most two, so there is
not much to be optimized. Here, we ignore any costs needed for packing the data words into a message; these costs
are proportional to the number of data words to be sent.
For partitioning into a larger numbers of parts than two, recursive bipartitioning is
often used, and then the history of previous bipartitionings must be taken into account if for instance we also
want to minimize the total number of messages. Here, starting up a new message incurs an extra cost, above the
cost of sending a data word. This could be done by a sophisticated recursive hypergraph bipartitioning
approach~\cite{selvitopi17} that simultaneously reduces the communication volume and the number of messages,
thus solving a different optimization problem; this problem, however, is beyond the scope of the present paper.

Many exact partitioning algorithms have been developed for
graphs~\cite{delling14,karisch00,sensen01,felner05,hager13}. All these algorithms minimize the edge cut, not the
communication volume. Felner~\cite{felner05} solves a graph partitioning problem with uniform edge weights to
optimality with a purely combinatorial branch-and-bound method, reaching up to 100 vertices and 1000 edges.
Delling~\textit{et al.\ }\cite{delling14} solved larger problems using packing-tree bounds and graph
contractions, and solved  the open street map problem \allowbreak\texttt{luxembourg} with 114,599 vertices and
119,666 edges in less than a minute.

For hypergraphs, much less work has been done on exact partitioning~\cite{caldwell00,kucar04,bisseling05}.
Kucar~\cite{kucar04} uses integer linear programming (ILP)  to solve a problem with 1888 vertices, 1920 nets
(hyperedges), and 5471 pins (nonzeros) in three days of CPU time; the heuristic solver hMetis~\cite{karypis99b}
managed to find a solution in less than a second for the same problem, and it turned out to be optimal. Bisseling
and his team members~\cite{bisseling05} solved an industrial call-graph problem by formulating it as a hypergraph
partitioning problem with the cut-net metric, and they solved it heuristically by using Mondriaan and exactly by
an ILP method (in 9 days of CPU time).

For exact sparse matrix partitioning, the problem could in principle be formulated as a hypergraph bipartitioning
problem by using the fine-grain model~\cite{catalyurek01}: each matrix nonzero becomes a vertex in the hypergraph; 
the nonzeros in a row are connected by a row-net   and the nonzeros in a column by a column-net. Thus, we obtain
a hypergraph with $|A|$ vertices and $m+n$ nets, with the special property that each vertex is contained in
precisely two nets. One of these nets thus belongs to a group of $m$ pairwise disjoint row-nets,
and the other to a group of $n$ pairwise disjoint
column-nets. Furthermore, no two vertices have the same pair of nets. An exact general hypergraph partitioner
could then be used to solve the problem to optimality.  This, however, is less efficient than direct exact sparse
matrix partitioning, since the hypergraph partitioner would not exploit the aforementioned properties. In
contrast, the direct matrix approach imposes them by construction.

Previous work~\cite{pelt15} presented the first direct exact matrix partitioner, implemented in the
open-source software MondriaanOpt. This work was optimized and parallelized by Mumcuyan and
coworkers~\cite{mumcuyan18} who  reordered the matrix given to MondriaanOpt (thus changing the order in which the
search space was traversed), automatically choosing the best reordering method from a set of methods by a
machine-learning approach, and by parallelising the software for a shared-memory computer using OpenMP.  Our own
improvements, in the present article, are orthogonal to these extensions, so that they can be combined.

The remainder of this paper is organized as follows: Section~\ref{sec:np} presents the \NP-completeness proof 
for $\epsilon$-balanced matrix bipartitioning. Section~\ref{sec:opt} briefly reviews the previously mentioned
branch-and-bound algorithm~\cite{pelt15} that was implemented in MondriaanOpt, and presents the generalized bounds 
and their implementation. Section~\ref{sec:experiments} presents the experimental results, comparing MP to
MondriaanOpt for 233 small matrices, and giving results for 599 larger matrices that could not be solved by
MondriaanOpt within the allotted time. It also presents a comparison between two heuristic methods, PaToH and
Mondriaan, using these optimal partitionings.
Section~\ref{sec:concl} presents the conclusions and discusses possible future work.

		\section{Hardness results}
	\label{sec:np}
	In this section we will formally analyze matrix bipartitioning
	and prove that it is \NP-Complete, even if we fix the number 
	of processors to $k = 2$ (instead of leaving $k \geq 2$ arbitrary).
	This problem is a special case of the \NP-Complete hypergraph partitioning
	problem~\cite[Chapter 6]{lengauer90}.

	\subsection{Preliminaries}
	\label{1-preliminaries}
	To begin, let us define a formal decision-variant of the matrix
	partitioning problem for $k = 2$, based on the optimization
	variant described in \autoref{sec:intro} where the goal
	is to minimize the total communication volume. We formulate our
	decision problems in the style of Garey and Johnson~\cite{garey79}.
	Given a fixed $\epsilon \in [0, 1)$, we define \mbpt as follows:

	\begin{problem}
		\problemtitle{\mbpt}
		\probleminput{An $m \times n$ matrix $A$, whose nonzeros are precisely
			indexed by the set
			$Z \subseteq \{\, 1, \dots, m \,\} \times
				\{\, 1, \dots, n \,\}$, and an integer $M$, the required
			maximum volume.}
		\problemquestion{Does there exist a disjoint partitioning of $Z$
			into $Z_1 \cup Z_2$ such that $|Z_i| \leq (1+\epsilon)\tfrac{|Z|}{2}$
			and volume $VOL(Z_1, Z_2) \leq M$?}
	\end{problem}

	Note that $\epsilon$ is not part of the problem instance, but of the problem definition.
	Each $\epsilon$ induces its own partitioning problem, and we really have a class of
	bipartitioning problems here.

	Here $VOL(Z_1, Z_2)$ counts the number of rows and columns that have nonzeros
	in $Z_1$ and $Z_2$, as before.
	Additionally, to simplify presentation we will from here on, without loss of
	generality, assume that $(1+\epsilon)\tfrac{|Z|}{2}$ is an integer, rather
	than place rounding symbols everywhere. Ultimately, all the value of
	$\epsilon$ does is induce some integer upper bound on the $|Z_i|$.
	So for any $\epsilon$ we can, for the purposes of whichever problem instance
	we are considering, replace it with some $\epsilon'$ satisfying this integrality
	constraint, without affecting the bound on $|Z_i|$.
	Notice that requiring $|Z_i| \leq (1+\epsilon)\tfrac{|Z|}{2}$ is equivalent to
	requiring that $\big||Z_1| - |Z_2|\big| \leq \epsilon|Z|$, and that
	$\epsilon|Z|$ must be an integer as well.

	When thinking about the matrix bipartitioning problem, it is helpful
	to reformulate it in terms of graphs. Given an $m \times n$ matrix $A$
	we can define a bipartite adjacency graph $G(A) = (V(A), E(A))$ with
	$m$ vertices representing the rows of $A$, and $n$ vertices representing
	the columns, where a row vertex $r$ and a column vertex $c$ are connected
	if and only if $A_{rc}$ is nonzero.

	This equivalence extends to the partitioning problem. A bipartitioning of
	the nonzeros of $A$ corresponds to a bipartitioning of the edges of $G(A)$,
	and the rows and columns contributing to the final volume correspond
	precisely to the vertices with edges in both sides of the partition. See
	also \autoref{mat-equiv-fig}.

	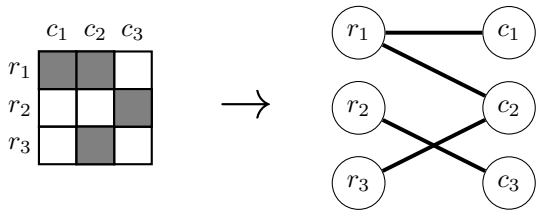
\begin{figure}[h]
		\begin{tikzpicture}
[
	box/.style={rectangle,draw=black,thick, minimum size=0.5cm},
]
		
	\begin{scope}
		\node at (2.5, 0.5){\huge$\rightarrow$};
	\end{scope}

	\begin{scope}[shift={(0, 0)}]
		\foreach \x in {0,0.5,1}{
			\foreach \y in {0,0.5,1}
				\node[box] at (\x,\y){};
		}
		\node[box,fill=gray] at (0, 1){};
		\node[box,fill=gray] at (0.5, 1){};
		\node[box,fill=gray] at (1, 0.5){};
		\node[box,fill=gray] at (0.5, 0){};
		\node[] at (-0.5,	0) {$r_3$};
		\node[] at (-0.5,	0.5) {$r_2$};
		\node[] at (-0.5,	1) {$r_1$};
		\node[] at (0,	1.5) {$c_1$};
		\node[] at (0.5,	1.5) {$c_2$};
		\node[] at (1,	1.5) {$c_3$};
	\end{scope}

	\begin{scope}[shift={(4, 0)}]
		\node[shape=circle,fill=white,draw=black,minimum size=8pt] (1)
			at (0, 1.5) {$r_1$};
		\node[shape=circle,fill=white,draw=black,minimum size=8pt] (2)
			at (0, 0.5) {$r_2$};
		\node[shape=circle,fill=white,draw=black,minimum size=8pt] (3)
			at (0, -0.5) {$r_3$};
		\node[shape=circle,fill=white,draw=black,minimum size=8pt] (4)
			at (2, 1.5) {$c_1$};
		\node[shape=circle,fill=white,draw=black,minimum size=8pt] (5)
			at (2, 0.5) {$c_2$};
		\node[shape=circle,fill=white,draw=black,minimum size=8pt] (6)
			at (2, -0.5) {$c_3$};

		\draw[line width=1.5pt] (4) -- (1) -- (5) -- (3);
		\draw[line width=1.5pt] (2) -- (6);
	\end{scope}
\end{tikzpicture}
		\centering
		\caption{Bipartite graph and matrix equivalence.}
		\label{mat-equiv-fig}
	\end{figure}
	
	This procedure is also reversible, i.e. for any
	bipartite graph $G$ on $m$ and $n$ vertices,
	we can construct a corresponding matrix $A$ of size $m \times n$ which has
	nonzeros precisely for the vertices that are connected in $G$. While the
	nonzero entries of this matrix can have any nonzero value, the associated nonzero
	pattern is uniquely determined by the edges of $G$.

	To this end, we define an equivalent bipartitioning problem on graphs that
	we will base our reduction on:

	\begin{problem}
		\problemtitle{\geb}
		\probleminput{Given a graph
			$G = (V, E)$ and an integer
			$M$.}
		\problemquestion{Does there exist a disjoint partitioning of $E$
			into $E_1 \cup E_2$ such that $|E_i| \leq (1+\epsilon)\tfrac{|E|}{2}$
			and
			$\big|\big(\bigcup_{e\in E_1} e \big) \cap
				\big(\bigcup_{e\in E_2} e \big)\big| \leq M$?\footnote{
			If we write an edge as the set $\{ u, v\} \subset V$, then
			$\bigcup_{e \in E_1} e$ gives precisely the set of vertices touched
			by $E_1$.}}
	\end{problem}

	We call a vertex with adjacent edges from both sides of the partitioning
	`cut'. The
	goal is to minimize the number of cut vertices. Additionally, when we
	explicitly need the partitioning/coloring of $E$, we will write it 
	as a map $\pi : E \to \{1,2\}$.

	\subsection{\geb is \NP-Complete}
	We will perform a reduction from the \gb problem, a classical problem
	first proven \NP-Complete
	for $\epsilon = 0$~\cite{gb-np-complete}, and later also for all
	$\epsilon\in[0,1)$~\cite{bui92,wagner1993between}.

	\begin{problem}
		\problemtitle{\gb}
		\probleminput{A graph $G = (V, E)$, an integer $M$.}
		\problemquestion{Does there exist a disjoint partitioning of $V$
			into $V_1 \cup V_2$ with $|V_i| \leq (1+\epsilon)\tfrac{|V|}{2}$
			such that
			$|\{\, \{u, v\} \in E\,\mid\, u \in V_1, v\in V_2 \,\}| \leq M$?}
	\end{problem}

	Analogously to the \geb problem, we call an edge with endpoints in both sides of the
	partitioning `cut'; the goal is then to minimize the number of
	cut edges. We similarly write a partitioning of $V$ as
	a map $\tau : V \to \{1, 2\}$. We can also think of $\tau$ as coloring
	the vertices in $V$, where one side of the partitioning has the color red, and one
	side has the color blue. This should not be confused, however, with the classical
	graph coloring problem, since we allow neighbouring vertices to have the
	same color.

	Let us give a sketch of our proof strategy:
	given an instance $(G, M)$ of the \gb problem, we
	will build a new graph $G'$, whose optimal solution under the 
	\geb problem will give us an optimal solution under \gb on $G$.

	We first define the construction of such an instance,
	giving the promised sketch directly after.
	Let us define the \textit{clique expansion} of a graph.
	A visual example (with smaller $S$, for legibility) is also given in
	\autoref{figure-graph-clique-conversion}.

	\begin{definition}
		\label{clique-expansion}
		Given a graph $G = (V, E)$,
		let $S = 4 + 2|V|\binom{|E|}{2}$.
		We define the clique expansion
		$\mathcal{K}(G) = (V', E')$ as first taking a disjoint union of $|V|$
		copies of the complete graph $K_S$. Then, labelling the
		edges in $E$ as $e_1, e_2, \dots, e_{|E|}$, for each edge
		$e_i = \{u, v\}$
		we merge the $i^{th}$ vertex of the clique $K_S$ representing
		$u$ with the $i^{th}$ vertex of the clique $K_S$ representing $v$.
	\end{definition}

	Here, the chosen clique size $S$ will allow us to prove several
	useful propositions later. Notice that by construction, each pair of
	cliques is merged at most once, each time in a previously unmerged vertex.
	As a consequence, while two cliques can share a single vertex,
	they do not share edges.

	Informally, for each vertex $u \in V$ we create a clique $K_u$ of size
	$S$. For every edge $\{u, v\} \in E$, we merge
	two vertices in the cliques $K_u$ and $K_v$ together into a single
	vertex. We then solve the \geb problem on the resulting graph $(V', E')$, and
	translate the resulting coloring of its edges into a coloring of the
	vertices of $(V, E)$. (Here, the colors correspond to the two parts in the
	partition.) We note that if each clique in $(V', E')$ is colored
	monochromatically then we can color each vertex in $(V, E)$ with
	the color of its corresponding clique in $(V', E')$. Then an edge
	between two differently colored vertices in $(V, E)$ will correspond
	precisely with a vertex shared by two differently colored cliques
	in $(V', E')$. The \geb problem gives us no guarantee that each clique
	is indeed colored monochromatically, but we will work around this later.

	\begin{figure}[h]
\newcommand\single[3]{ 
	\foreach \x in {1,...,#2}{
		\pgfmathsetmacro{\ang}{360/#2}
		\pgfmathparse{(\x-1)*\ang}
		\node[shape=circle,fill=black,draw=black] (#1-\x)
			at (\pgfmathresult:1cm) {};
	}
	\foreach \x [count=\xi from 1] in {1,...,#2}{
		\foreach \y in {\x,...,#2}{
			\path (#1-\xi) edge[color=#3,line width=1.75pt] (#1-\y);
		}
	}
}
\begin{scaletikzpicturetowidth}{\linewidth}
\begin{tikzpicture}[scale=\tikzscale]

	\begin{scope}
		\node at (4, 0){\huge$\rightarrow$};
	\end{scope}

	\begin{scope}
		\node at (8, -2.5){\huge$\downarrow$};
	\end{scope}

	\begin{scope}
		\node at (4, -5){\huge$\leftarrow$};
	\end{scope}

	\begin{scope}
		\node[shape=circle,fill=black,draw=black,minimum size=8pt] (1)
				at ($(-0.5,{sqrt(3)*-0.5})$) {};
		\node[shape=circle,fill=black,draw=black,minimum size=8pt] (2)
				at ($(-0.5,{sqrt(3)*0.5})$) {};
		\node[shape=circle,fill=black,draw=black,minimum size=8pt] (3)
				at (1, 0){};
		\node[shape=circle,fill=black,draw=black,minimum size=8pt] (4)
				at ($({1+sqrt(2)}, 0)$) {};

		\draw[line width=1.5pt] (1) -- (2) -- (3) -- (1);
		\draw[line width=1.5pt] (3) -- (4);
	\end{scope}

	\begin{scope}[local bounding box=scope1,shift={(7, 0)}]
		\foreach \s[count=\si from 0] in {0,120,240}{
			\begin{scope}[shift={($(\s:1.1414)$)}, rotate=\s]
				\single{\si}{7}{black};
			\end{scope}
		}
		\begin{scope}[shift={(3.1415, 0)}, rotate=180]
			\single{4}{7}{black};
		\end{scope}
	\end{scope}

	\begin{scope}[local bounding box=scope1,shift={(7, -5)}]
		\foreach \s[count=\si from 0] in {0,120,240}{
			\begin{scope}[shift={($(\s:1.1414)$)}, rotate=\s]
				\ifthenelse{\equal{\si}{0}}
					{\single{\si}{7}{red};}
					{\single{\si}{7}{blue};}
			\end{scope}
		}
		\begin{scope}[shift={(3.1415, 0)}, rotate=180]
			\single{4}{7}{red};
		\end{scope}
	\end{scope}

	\begin{scope}[shift={(0, -5)}]
		\node[shape=circle,fill=blue,draw=blue,minimum size=8pt] (1)
				at ($(-0.5,{sqrt(3)*-0.5})$) {};
		\node[shape=circle,fill=blue,draw=blue,minimum size=8pt] (2)
				at ($(-0.5,{sqrt(3)*0.5})$) {};
		\node[shape=circle,fill=red,draw=red,minimum size=8pt] (3)
				at (1, 0){};
		\node[shape=circle,fill=red,draw=red,minimum size=8pt] (4)
				at ($({1+sqrt(2)}, 0)$) {};

		\draw[line width=1.5pt] (1) -- (2) -- (3) -- (1);
		\draw[line width=1.5pt] (3) -- (4);
	\end{scope}
\end{tikzpicture}
\end{scaletikzpicturetowidth}
		\centering
		\caption{Solving \gb using \geb.}
		\label{figure-graph-clique-conversion}
	\end{figure}

	Throughout this section we will make a slight abuse of
	terminology. A clique usually (and up until now) refers to any collection
	of pairwise connected vertices. However, from now on, when we talk about
	`cliques' in $\mathcal{K}(G)$ we will be referring specifically to the
	cliques corresponding to vertices, i.e. the cliques
	$\{\, K_u \,\mid\, u \in V \,\}$ in $\mathcal{K}(G)$.

	So after building $\mathcal{K}(G)$ from the \gb instance $G$,
	we can solve the \geb problem
	on it. We now want to show that both problems have optimal solutions
	of equal volume (cost).
	For a graph $G=(V, E)$, let $\sgb(G)$ denote the volume of any optimal
	solution of the \gb problem on $G$, and let $\sgeb(\mathcal{K}(G))$ denote
	the volume of any optimal solution of the \geb problem on its
	\textit{clique expansion} $\mathcal{K}(G)$.

	\begin{proposition}
		\label{geb-leq-gb}
		For any graph $G$ we have $\sgeb(\mathcal{K}(G)) \leq \sgb(G)$.
	\end{proposition}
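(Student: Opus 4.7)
The plan is to take an optimal coloring $\tau : V \to \{1,2\}$ for \gb on $G$ and lift it to a coloring $\pi : E' \to \{1,2\}$ for \geb on $\mathcal{K}(G)$ of no greater cost. The natural lift is to assign, for each $u \in V$, every edge of the clique $K_u$ the color $\tau(u)$. Because the construction in \autoref{clique-expansion} only merges vertices (never edges), each edge of $\mathcal{K}(G)$ belongs to exactly one clique $K_u$, so this rule defines $\pi$ unambiguously.

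First I would verify balance. Each clique $K_u$ contributes exactly $\binom{S}{2}$ edges to $E'$, and no two cliques share any edge, so $|E'| = |V|\binom{S}{2}$ and, for each color $c \in \{1,2\}$, $|E'_c| = \binom{S}{2}\cdot|\{\, u \in V : \tau(u)=c\,\}|$. Since $\tau$ is a feasible \gb solution, $|\{\, u : \tau(u)=c \,\}| \leq (1+\epsilon)|V|/2$, and multiplying both sides by $\binom{S}{2}$ gives $|E'_c| \leq (1+\epsilon)|E'|/2$, so $\pi$ satisfies the \geb balance constraint.

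Next I would count cut vertices of $\pi$. A vertex $w \in V'$ that belongs to only one clique $K_u$ is incident to edges of the single color $\tau(u)$, hence is not cut. A vertex $w$ that is the merged vertex for some edge $e_i = \{u,v\} \in E$ belongs to the two cliques $K_u$ and $K_v$; it is incident to edges of both colors precisely when $\tau(u) \neq \tau(v)$, i.e., precisely when $e_i$ is cut in $G$ under $\tau$. Hence the number of cut vertices under $\pi$ is exactly the number of cut edges under $\tau$, which is at most $\sgb(G)$. Combining with feasibility, $\sgeb(\mathcal{K}(G)) \leq \sgb(G)$.

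There is no real obstacle here: the argument is essentially bookkeeping, and the only place one might slip is conflating "cliques of $\mathcal{K}(G)$" in the general graph-theoretic sense with the specific cliques $K_u$. The proof's validity depends on the fact highlighted after \autoref{clique-expansion}, namely that distinct $K_u$ and $K_v$ share at most one vertex and no edges; this is what makes $\pi$ well-defined and makes the edge count telescope cleanly.
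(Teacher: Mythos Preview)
Your proof is correct and follows essentially the same approach as the paper: lift an optimal \gb coloring $\tau$ to an edge-coloring $\pi$ of $\mathcal{K}(G)$ by giving every edge of $K_u$ the color $\tau(u)$, then check balance by multiplying through by $\binom{S}{2}$ and identify cut vertices with cut edges via the merged vertices. If anything you are slightly more explicit than the paper in noting that cliques share no edges, which is what makes $\pi$ well-defined.
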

	\begin{proof}
		Consider any valid $\epsilon$-balanced bipartitioning $\tau$ of
		$G = (V, E)$ of cost $M$ (that is, there are
		exactly $M$ edges $\{u, v\}$ with $\tau(u) \neq \tau(v)$).
		For any vertex $u$,
		color all edges in the corresponding clique $K_u$ in $\mathcal{K}(G)
			= (V', E')$ with the same color, i.e. for any edge $e$ in $K_u$
		let $\pi(e) := \tau(u)$, giving a partitioning $E' = E^{\prime}_1 \cup E^{\prime}_2$.

		Since each clique $K_u$ in $\mathcal{K}(G)$ has the same number of
		edges, $\binom{S}{2}$, and by
		assumption the partitioning of $V$ satisfies $|V_i| \leq (1+\epsilon)\tfrac{|V|}{2}$,
		it follows that $|E^{\prime}_i| \leq (1+\epsilon)\tfrac{|V|}{2}\binom{S}{2}
			= (1+\epsilon)\tfrac{|E'|}{2}$.

		Now let $s$ be a vertex in $\mathcal{K}(G)$. If $s$ is contained in
		only one clique, it cannot be cut,
		since we color the edges of
		each $K_u$ monochromatically. If $s$ is shared by two
		cliques $K_u$ and $K_v$, then $s$ corresponds to the edge
		$e = \{u, v\} \in E$, and
		we can see that this vertex is cut by $\pi$
		if and only if $e$ is cut by $\tau$ (since $K_u$ and
		$K_v$ are colored like $u$ and $v$ respectively).
		Since by construction,
		each vertex is in at most two cliques, there is no ambiguity.

		So the number of cut vertices in the induced partitioning $\pi$ of the
		edges of
		$\mathcal{K}(G)$ is exactly the number of cut edges in the original
		partitioning $\tau$ of the vertices of $G$. We can then minimize
		over all valid partitionings $\tau$ of $G$ to achieve the desired inequality.
	\end{proof}

	Unfortunately, the converse is harder to prove since we cannot guarantee
	that an optimal partitioning of $\mathcal{K}(G)$ colors each clique
	monochromatically. It turns
	out however, that we can still deterministically associate a color with
	each clique, provided we have an optimal partitioning.

	\begin{definition}
		Let $K$ be a clique and suppose we have a coloring of its edges. The
		dominating color of $K$ is a color $c$ such that there exists a vertex
		in $K$ with all of its adjacent edges colored $c$.
	\end{definition}

	While this property is not well-defined in general, it is for our
	restricted case:

	\begin{lemma}
		\label{dom-col}
		Given a graph $G = (V, E)$ and an optimal partitioning $\pi$
		of the edges of its clique
		expansion $\mathcal{K}(G)$. Then each clique $K_u$ in $\mathcal{K}(G)$
		has a well-defined dominating color.
	\end{lemma}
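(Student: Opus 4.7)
My plan is to prove the lemma's two implicit claims: existence (some vertex in $K_u$ has all its $K_u$-incident edges of one color) and uniqueness (any two such vertices must use the same color). Uniqueness is immediate: if $v_1, v_2 \in K_u$ had monochromatic stars of colors $c_1, c_2$, the edge $\{v_1, v_2\} \in K_u$ would carry both colors, forcing $c_1 = c_2$.

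For existence, I would argue by contradiction using the optimality of $\pi$. Suppose $K_u$ has no dominating color; then every vertex of $K_u$ sees both colors within $K_u$, so all $S$ vertices of $K_u$ are cut. Let $x_u$ count the color-$1$ edges of $K_u$, and without loss of generality assume $x_u \geq \binom{S}{2}/2$. One checks that for $S \geq 4$ this inequality gives $x_u \geq S - 1$, and that $S > |E| \geq \deg_G(u)$ guarantees the existence of at least one \emph{unshared} vertex $v \in K_u$ (a vertex not merged with any vertex of another clique). I would build $\pi'$ by rearranging colors strictly inside $K_u$: color all $S-1$ edges of $K_u$ incident to $v$ with color $1$, then distribute the remaining $x_u - (S-1)$ color-$1$ edges and $\binom{S}{2} - x_u$ color-$2$ edges arbitrarily on the other edges of $K_u$.

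Because the total number of edges of each color is preserved globally, $\pi'$ remains $\epsilon$-balanced. Only vertices of $K_u$ can change cut-status, since no edge outside $K_u$ was touched. Every vertex of $K_u$ was cut under $\pi$, whereas under $\pi'$ the unshared vertex $v$ has all its incident edges (in the full graph) colored $1$ and is therefore uncut, so the contribution of $K_u$ drops from $S$ to at most $S - 1$. This strictly improves the cost and contradicts the optimality of $\pi$. The main obstacle is producing a rearrangement that simultaneously respects the balance constraint and provably decreases the cut count; this is resolved by recoloring \emph{locally} in $K_u$ while preserving its color counts (so global balance comes for free), and by exploiting both that $S$ is large enough to concentrate a monochromatic star at a single vertex and that the construction of $\mathcal{K}(G)$ always leaves at least one unshared vertex per clique available for this role.
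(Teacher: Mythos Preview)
Your proof is correct, and the uniqueness argument is identical to the paper's. For existence, however, you take a genuinely different route. The paper argues globally: if every vertex of $K_u$ is cut, then $\pi$ has cost at least $S$, but by the earlier Proposition (that $\sgeb(\mathcal{K}(G)) \leq \sgb(G)$) and the trivial bound $\sgb(G) \leq |E|$, the optimal cost is at most $|E| < S$, a contradiction. Your argument is instead local and constructive: you exhibit an explicit improvement $\pi'$ by redistributing the colors inside $K_u$ so that one unshared vertex becomes monochromatic, while preserving the global color counts (hence the balance constraint) and leaving all edges outside $K_u$ untouched. The paper's version is shorter and leans on a result already in hand; yours is self-contained, not needing the prior proposition, at the cost of verifying a few extra facts (that $\binom{S}{2}/2 \geq S-1$ for $S \geq 4$, that $S > |E| \geq \deg_G(u)$ guarantees an unshared vertex, and that only vertices of $K_u$ can change status since cliques share no edges). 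Both arguments are valid and arrive at the same contradiction with optimality.
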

	\begin{proof}
		Fix a clique $K_u$ in $\mathcal{K}(G)$. We need to prove existence and
		uniqueness of its dominating color.

		First we prove uniqueness: to the contrary, assume there are two
		vertices $r, b$ in $K_u$ such that $r$ has only red edges adjacent,
		and $b$ only blue edges. Since $K_u$ is a clique, the edge $\{r, b\}$
		exists, which must be both red and blue, a contradiction.

		As for existence, assume to the contrary that every vertex in $K_u$
		has both blue and red edges adjacent. But this means each vertex in
		$K_u$ is cut by the partitioning $\pi$, and so the cost of
		this partitioning of $\mathcal{K}(G)$ is at least the clique size
		$S = 4+2|V|\binom{|E|}{2}$. One may verify that for any graph,
		$S > |E|$. However, $|E|$ is a trivial upper bound on the \gb problem
		on $(V, E)$ (in which we cut \textit{every} edge in $E$),
		which, by \autoref{geb-leq-gb} is an upper bound on the optimal
		partitioning of the edges of $\mathcal{K}(G)$. Since we assumed our
		partitioning $\pi$ is optimal, i.e. has cut size
		exactly equal to $\sgeb(\mathcal{K}(G))$, this implies that

		$$|E| \geq \sgb(G) \geq \sgeb(\mathcal{K}(G)) \geq S > |E|$$
		which is a contradiction, so there must exist a vertex that only has
		adjacent edges of a single color.
	\end{proof}

	In addition to the above, we would like to note in particular that by
	definition, if $K_u$ has dominating color $c$ (meaning there is a vertex
	$u'$ with all adjacent edges colored $c$), then any vertex in $K_u$
	has at least one adjacent edge with color $c$ (namely the one connecting
	it to $u'$).

	We now have the tools to formulate a proof strategy:
	we will color vertices in $G$ by
	the dominating color of their cliques in an optimal partitioning of
	$\mathcal{K}(G)$.

	\begin{proposition}
		\label{gb-leq-geb}
		For any graph $G$ we have $\sgb(G) \leq \allowbreak\sgeb(\mathcal{K}(G))$.
	\end{proposition}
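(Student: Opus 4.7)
The plan is to produce a vertex-bipartitioning $\tau$ of $G$ from any optimal edge-bipartitioning $\pi$ of $\mathcal{K}(G)$ by setting $\tau(u)$ equal to the dominating color of the clique $K_u$, which is well-defined by \autoref{dom-col}. Having done so, I would need to verify two claims: (a) that $\tau$ is $\epsilon$-balanced, i.e.\ $|V_i|\leq(1+\epsilon)|V|/2$ for $i=1,2$; and (b) that the number of edges cut by $\tau$ in $G$ is at most the number of vertices cut by $\pi$ in $\mathcal{K}(G)$. Together, these two facts would give $\sgb(G)\leq\sgeb(\mathcal{K}(G))$.

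Claim (b) is the easier direction. If $\{u,v\}\in E$ is cut by $\tau$, say with $\tau(u)=1\neq 2=\tau(v)$, let $s$ be the unique vertex shared by $K_u$ and $K_v$. The definition of dominating color supplies a witness $u'\in K_u$ whose $S-1$ clique-edges are all colored $1$; in particular, $\{u',s\}$ has color $1$, so $s$ has an incident edge of color $1$ inside $K_u$. Symmetrically, $s$ has an incident edge of color $2$ inside $K_v$, so $s$ is cut by $\pi$. Since any two cliques in $\mathcal{K}(G)$ are merged in at most one vertex, distinct cut edges of $G$ yield distinct cut shared vertices of $\mathcal{K}(G)$, which is exactly the desired inequality on costs.

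Claim (a) is the harder part, and is where the seemingly extravagant size $S=4+2|V|\binom{|E|}{2}$ earns its keep. Starting from the chain $\sgeb(\mathcal{K}(G))\leq\sgb(G)\leq|E|$ (the first inequality by \autoref{geb-leq-gb}, the second by the trivial partitioning that cuts every edge of $G$), the total number of cut vertices in $\mathcal{K}(G)$ under $\pi$ is at most $|E|$, so every clique $K_u$ contains at least $S-|E|$ non-cut vertices. Each non-cut vertex has all $S-1$ of its clique-edges colored with the dominating color of $K_u$, by the uniqueness part of \autoref{dom-col}. It follows that any non-dominating-color edge in $K_u$ has both endpoints among the at most $|E|$ cut vertices of $K_u$, contributing at most $\binom{|E|}{2}$ such edges, and hence
\[|E'_1|\;\geq\;\sum_{u\in V_1}\bigl(\tbinom{S}{2}-\tbinom{|E|}{2}\bigr)\;=\;|V_1|\bigl(\tbinom{S}{2}-\tbinom{|E|}{2}\bigr).\]
Combining this with the edge-balance $|E'_1|\leq(1+\epsilon)|V|\binom{S}{2}/2$, and observing that for the chosen $S$ the ratio $\binom{|E|}{2}/\binom{S}{2}$ is of order $1/|V|^{2}$, the resulting upper bound on $|V_1|$ strictly undershoots $(1+\epsilon)|V|/2+1$; since both quantities are integers (by the standing assumption on $\epsilon$), this forces $|V_1|\leq(1+\epsilon)|V|/2$. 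The bound on $|V_2|$ follows by symmetry.

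The main obstacle is precisely (a). The notion of dominating color is supported by only a single witness vertex per clique, so edge-balance in $\pi$ does not \emph{a priori} translate into vertex-balance in $\tau$; most of the edges of a clique could in principle be colored with the non-dominating color. It is the gigantic clique size $S$, which forces all but $\binom{|E|}{2}$ edges of each clique to take the dominating color, that closes this gap and allows a clean integer-rounding argument to recover the required balance constraint.
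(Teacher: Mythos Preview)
Your proposal is correct and follows essentially the same approach as the paper: define $\tau$ via dominating colors, establish the cost inequality~(b) by mapping each cut edge $\{u,v\}$ of $G$ to the unique shared vertex of $K_u$ and $K_v$ (which must be cut since it sees both dominating colors), and establish balance~(a) from the key observation that each clique carries at most $\binom{|E|}{2}$ edges of its non-dominating color, because non-cut vertices are monochrome in the dominating color and there are at most $|E|$ cut vertices in total.

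The only difference is in the final algebra for~(a). The paper bounds the difference $r-b$ by comparing a lower bound on the number of edges in the majority color against an upper bound on the number in the minority color, arriving at $(r-b-\epsilon|V|)\binom{S}{2}\leq 2r\binom{|E|}{2}$, and then substitutes $S$ explicitly to force the integer $r-b-\epsilon|V|$ to be nonpositive. You instead bound $|V_1|$ directly from $|E'_1|\leq(1+\epsilon)\tfrac{|V|}{2}\binom{S}{2}$ and the per-clique lower bound $\binom{S}{2}-\binom{|E|}{2}$, then round. Your route is arguably a bit cleaner, but both rely on the same inequality $(|V|{+}1)\binom{|E|}{2}<\binom{S}{2}$, which follows immediately from $\binom{S}{2}\geq S=4+2|V|\binom{|E|}{2}$; you should spell this out rather than leave it at ``of order $1/|V|^{2}$''.
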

	\begin{proof}
		Fix any optimal partitioning $\pi$ of $\mathcal{K}(G) = (V', E')$, and
		let $\tau$ color each vertex $u$ in $G = (V, E)$ with the dominating
		color of its associated clique $K_u$ in $\mathcal{K}(G)$. We would
		like to
		prove two things about this partitioning $\tau$:
		that the number of cut edges in $G$ is no more than the number of
		cut vertices in $\mathcal{K}(G)$, and that it is a balanced partitioning
		of $V$.
		
		We first show the number of edges cut by $\tau$ in $G$ is at most the
		number of vertices cut by $\pi$ in $\mathcal{K}(G)$. Suppose $\tau$
		cuts edge $e_i = \{u, v\} \in E$, that is, $\tau(u) \neq \tau(v)$.
		This means that the dominating colors of $K_u$ and $K_v$ are different,
		say without loss of generality that $K_u$ is red and $K_v$ is blue.
		Hence, the vertex $s$ in $\mathcal{K}(G)$ that corresponds to $e_i$,
		which we obtained during construction by merging the $i^{th}$
		vertex of $K_u$ with the $i^{th}$ vertex of $K_v$, must have
		red edges adjacent, because it is contained in $K_u$, and blue
		edges, because it is contained in $K_v$. So $\pi$ cuts $s$.
		Since for every edge $e_j \in E$ we merged different
		vertices (specifically, for $e_j$ we used the $j^{th}$ vertex of
		the two cliques), each edge in $E$ cut by $\tau$ has a unique
		corresponding vertex $s$ in $\mathcal{K}(G)$ cut by $\pi$,
		proving the first part.

		Next, to show that $\tau$ is a balanced partitioning of $V$, we will
		equivalently show that our optimal partitioning $\pi$ of $\mathcal{K}(G)$
		colors no more than $(1+\epsilon)\tfrac{|V|}{2}$ cliques with red as their
		dominating color, and the same for blue.

		Let $r,b \geq 0$, $r+b=|V|$, count these quantities, assuming
		without loss of generality that $r \geq b$. 
		First we derive a lower
		bound on the number of red edges in a clique in $\mathcal{K}(G)$.
		In each red clique we have at most
		$|E|$ cut vertices (since we assumed our partitioning was optimal,
		as in the proof of \autoref{dom-col}; in fact,
		across all cliques there are at most $|E|$ cut vertices, but for a
		lower bound this will suffice), and the edges between two such
		vertices may be blue, but none of the other $S - |E|$ vertices
		in this clique are
		cut, so all other edges should be red, and a lower
		bound on the number of red edges in $\pi$ is
		$$r \left( \binom{S}{2} - \binom{|E|}{2}\right).$$

		Similarly, we can find an upper bound for the number of blue edges
		by the following reasoning: we color each blue-dominated clique
		entirely blue, and as many edges as possible in each red-dominated
		clique (at most $\binom{|E|}{2}$, as before). This gives as
		an upper bound
		$$b\binom{S}{2} + r\binom{|E|}{2}.$$

		But since $\pi$ was an optimal \textit{balanced} partitioning of the
		edges of $\mathcal{K}(G)$, certainly the lower bound on the number
		of red edges must be smaller than or equal to the upper bound on
		the number of blue edges, plus the imbalance allowed by $\epsilon$:

		$$r\left( \binom{S}{2} - \binom{|E|}{2}\right)
			- \left(b\binom{S}{2} + r\binom{|E|}{2}\right)
			\leq \epsilon|E'| = \epsilon|V|\binom{S}{2}.$$

		Reordering terms gives:

		\begin{equation}
			\label{ineq-pf}
			(r-b-\epsilon|V|)\binom{S}{2} \leq 2r\binom{|E|}{2}.
		\end{equation}

		Recall that we took $S = 4 + 2|V|\binom{|E|}{2}$. Since $S \geq 4$
		we have $S \leq \binom{S}{2}$, as well as $r \leq |V|$. So if
		\autoref{ineq-pf} holds, then certainly the following holds:
		$$(r-b-\epsilon|V|)S \leq 2|V|\binom{|E|}{2}.$$

		Substituting $S$ and rewriting we get
		$$4(r-b-\epsilon|V|) + 2(r-b-1-\epsilon|V|)|V|\binom{|E|}{2} \leq 0.$$
 
		Since we assumed $\epsilon|V| \in \mathbb{N}$ we also have
		$(r-b-\epsilon|V|) \in\mathbb{Z}$. Then the inequality can only be true if
		$r-b-\epsilon|V| \leq 0$, and since we assumed $r \geq b$, this gives
		$$0\leq r-b\leq \epsilon|V|,$$
		hence our partitioning is balanced in $G=(V, E)$.

		Combining the obtained results, we can turn any optimal solution to the \geb
		problem on $\mathcal{K}(G)$ into a solution of equal value to the \gb
		problem on $G$, proving \autoref{gb-leq-geb}.
	\end{proof}

	We are now almost ready to show that \geb is \NP-Complete. All that remains is
	showing that it is in \NP:
	\begin{proposition}
		\label{poly-size}
		The size of $\mathcal{K}(G) = (V', E')$ is polynomial in the size of
		$G = (V, E)$.
	\end{proposition}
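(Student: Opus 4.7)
The plan is to prove this by a direct counting argument on the construction in \autoref{clique-expansion}. The statement is essentially immediate once the sizes are computed, so the main work is bookkeeping; there is no real obstacle.

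First I would count $|V'|$ and $|E'|$ directly. We start with $|V|$ disjoint copies of $K_S$, contributing $|V|\cdot S$ vertices and $|V|\cdot \binom{S}{2}$ edges in total. The construction then performs $|E|$ merge operations, one per edge of $G$. Each merge identifies two previously distinct vertices (in different cliques, which share no edges by the remark following \autoref{clique-expansion}), so each merge reduces the vertex count by exactly one and leaves the edge count unchanged. Hence
\[
|V'| = |V|\cdot S - |E|, \qquad |E'| = |V|\cdot \binom{S}{2}.
\]

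Next I would substitute $S = 4 + 2|V|\binom{|E|}{2}$ and observe that $S$ is a polynomial of degree at most $3$ in $|V|+|E|$. Consequently $|V'| = O(|V|^2|E|^2)$ and $|E'| = O(|V|^3|E|^4)$, both polynomial in the size of $G$. Since the description of $\mathcal{K}(G)$ (adjacency list, say) has size $O(|V'| + |E'|)$, the whole encoding is polynomial in $|V|+|E|$, proving the proposition.

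Finally, I would note (perhaps just in a sentence) that the construction itself can be carried out in polynomial time: pick an arbitrary labelling $e_1,\dots,e_{|E|}$ of $E$, instantiate $|V|$ cliques of size $S$ with labelled vertices, and perform the $|E|$ merges in time proportional to the output size. This is useful because the $\NP$-completeness argument not only needs $\mathcal{K}(G)$ to be of polynomial size but also to be computable by a polynomial-time reduction, which follows from the same bounds.
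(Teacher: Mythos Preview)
Your proposal is correct and takes essentially the same approach as the paper: both argue that each of the $|V|$ cliques has $O(|V||E|^2)$ vertices and $O(|V|^2|E|^4)$ edges, yielding $|V'| = O(|V|^2|E|^2)$ and $|E'| = O(|V|^3|E|^4)$. Your version is in fact slightly more careful, giving the exact counts $|V'| = |V|S - |E|$ and $|E'| = |V|\binom{S}{2}$ and noting that the construction can be carried out in polynomial time, which the paper leaves implicit.
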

	\begin{proof}
		By construction, each vertex in $V$ induces a subgraph with
		$O(|V||E|^2)$
		vertices and $O(|V|^2|E|^4)$ edges. After merging, the graph will only
		become smaller. So $|V'|$ is $O(|V|^2 |E|^2)$ and $|E'|$ is
		$O(|V|^3|E|^4)$.
	\end{proof}

	We can now conclude:

	\begin{theorem}
		\geb is \NP-Complete.
	\end{theorem}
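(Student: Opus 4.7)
My plan is to combine the three preceding results with a routine verification that \geb lies in \NP. I would split the proof into two short steps.

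First, for membership in \NP, I would take the certificate to be a partition $\pi : E \to \{1, 2\}$, which has size linear in the input. Checking the balance condition $|E_i| \leq (1+\epsilon)\tfrac{|E|}{2}$ is immediate, and the cut vertex count $\big|\bigl(\bigcup_{e \in E_1} e\bigr) \cap \bigl(\bigcup_{e \in E_2} e\bigr)\big|$ can be computed by a single scan that marks each vertex with the colors of its incident edges and then counts those marked with both; comparing the result to $M$ finishes the verifier.

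Second, for \NP-hardness I would reduce from \gb, which is \NP-Complete for every $\epsilon \in [0, 1)$~\cite{bui92,wagner1993between}. Given an instance $(G, M)$ of \gb, the reduction outputs $(\mathcal{K}(G), M)$, where $\mathcal{K}(G)$ is the clique expansion of \autoref{clique-expansion}. Construction is polynomial in $|G|$ because $S = 4 + 2|V|\binom{|E|}{2}$ is polynomial and only $|E|$ vertex merges are performed; \autoref{poly-size} confirms that the output size itself is polynomial. Correctness is immediate from Propositions~\ref{geb-leq-gb} and~\ref{gb-leq-geb}, which together give $\sgeb(\mathcal{K}(G)) = \sgb(G)$, so $(G, M)$ is a YES-instance of \gb if and only if $(\mathcal{K}(G), M)$ is a YES-instance of \geb.

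The real difficulty — showing that the dominating-color assignment induces a low-cost, $\epsilon$-balanced partition of $V$ — has already been discharged in \autoref{gb-leq-geb}, via \autoref{dom-col}. So no new obstacle remains here; the theorem simply stitches together \NP-membership, the polynomial-time reduction, and the two-sided inequality between $\sgb(G)$ and $\sgeb(\mathcal{K}(G))$.
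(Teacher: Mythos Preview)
Your proposal is correct and mirrors the paper's argument exactly: both reduce from \gb via the clique expansion $\mathcal{K}(G)$, invoke \autoref{geb-leq-gb} and \autoref{gb-leq-geb} for the two directions of correctness, and cite \autoref{poly-size} for the polynomial size of the reduction. If anything, your explicit verifier argument for \NP-membership is cleaner than the paper's treatment, which announces ``all that remains is showing that it is in \NP'' but then only states \autoref{poly-size} before concluding.
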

	\begin{proof}
		We claim $\gb \leq_{\mathcal{P}} \geb$.
		For a given instance of \gb $(G, M)$, by \autoref{geb-leq-gb} and
		\autoref{gb-leq-geb} we know
		$\sgb(G) \leq M$ if and only if $\sgeb(\mathcal{K}(G)) \leq M$. So
		if we can solve \geb on $\mathcal{K}(G)$ (which has size polynomial
		in the size of $G$, by \autoref{poly-size}) in polynomial time, we can
		also solve \gb on $G$ in polynomial time.
	\end{proof}

	\subsection{\mbpt is \NP-Complete}

	We now consider the original \mbpt problem. As mentioned in
	\autoref{1-preliminaries}, it is equivalent to partitioning the edges of
	a graph. We would like to immediately draw an equivalence between \mbpt
	and \geb -- but note that for a matrix $A$ the associated graph $G(A)$ is
	always bipartite. Therefore, not every graph on which we might want to solve the
	\geb problem can be immediately converted to an equivalent matrix in the
	context of the \mbpt problem. However, we can resolve this:

	\begin{definition}
		Given a graph $G = (V, E)$, its \textit{edge-split graph}
		$\mathcal{S}(G) = (V', E')$ is given as:

		$$V' = V \cup \{\, v_e \,\mid\, e \in E \,\}$$
		$$E' = \bigcup_{e = \{u, w\} \in E} \{\,\{u, v_e\},\, \{v_e, w\}\,\}.$$
	\end{definition}

	In other words, we replace each edge by a path of length two. The resulting
	graph is bipartite (with sides $V$ and $V' \setminus V$). Using
	this bipartite extension of a graph, we can build a matrix and use \mbpt
	to solve the \geb problem.

	First, we prove that we can safely take the \textit{edge-split graph}
	without affecting the \geb problem.

	\begin{proposition}
		\label{split-graph}
		For any graph $G$, we have $\sgeb(G) = \allowbreak\sgeb(\mathcal{S}(G))$.
	\end{proposition}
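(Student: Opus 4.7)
The plan is to establish both inequalities via the natural bijection between edges $e \in E$ and the split vertices $v_e$ of $\mathcal{S}(G)$. The forward inequality $\sgeb(\mathcal{S}(G)) \leq \sgeb(G)$ is the easy direction: I would take an optimal coloring $\pi : E \to \{1,2\}$ on $G$ and lift it by assigning both edges $\{u, v_e\}$ and $\{v_e, w\}$ of $\mathcal{S}(G)$ the color $\pi(e)$. Since $|E'_i| = 2|E_i|$ and $|E'| = 2|E|$, the $\epsilon$-balance constraint carries over; every $v_e$ is monochromatic and hence uncut; and a base vertex $u \in V$ is cut in the lift exactly when it is cut under $\pi$, so the cost is preserved.

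For the reverse direction $\sgeb(G) \leq \sgeb(\mathcal{S}(G))$, I would start from an optimal $\pi'$ on $\mathcal{S}(G)$ and define $\pi(e)$ to be the common color of $\{u, v_e\}$ and $\{v_e, w\}$ whenever $v_e$ is uncut, leaving $\pi(e) \in \{1, 2\}$ as a free choice when $v_e$ is cut. The central observation is that the cost bound holds \emph{irrespective} of how those deferred choices are resolved. Indeed, if $v \in V$ is not cut by $\pi'$, all edges $\{v, v_f\}$ share a common color $c_v$, and $v$ becomes cut in $\pi$ only if some adjacent cut $v_f$ receives $\pi(f) \neq c_v$. For each cut $v_f = \{u, w\}$, the two permissible values of $\pi(f)$ are the colors of $\{u, v_f\}$ and $\{v_f, w\}$, which respectively agree with $c_u$ and $c_w$ whenever these are defined; at most one of $u, w$ is therefore harmed per cut split vertex. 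Summing, the number of newly cut base vertices is at most $k := \#\{e : v_e \text{ is cut in } \pi'\}$, whence $\mathrm{cost}(\pi) \leq C_V + k = \mathrm{cost}(\pi')$, where $C_V$ counts the base vertices already cut by $\pi'$.

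Since the cost bound is automatic, the full freedom in the deferred colors can be spent on balance. Letting $a, b$ denote the numbers of edges $e$ with $v_e$ uncut and common color $1, 2$, and $X$ the number of cut $v_e$ we color $1$, a direct computation gives $|E_1| - |E_2| = (a - b) + 2X - k$, while $|a - b| \leq \epsilon|E|$ is inherited from $\pi'$ (since $|E'_1| - |E'_2| = 2(a-b)$). A short case analysis on whether $|a - b| \leq k$ produces some $X \in [0, k]$ with $|(a - b) + 2X - k| \leq \max(1, \epsilon|E|)$, which already suffices whenever $\epsilon|E| \geq 1$.

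The main obstacle I anticipate is the degenerate case $\epsilon = 0$, where no imbalance is tolerated and even the unit-sized slack above is too large. Here a parity argument saves the day: $\epsilon = 0$ forces $a = b$ and $|E|$ to be even (for any balanced partition to exist at all), so $k = |E| - 2a$ is automatically even, and the choice $X = k/2$ gives $|E_1| = |E_2|$ exactly. Combining the two directions then yields $\sgeb(G) = \sgeb(\mathcal{S}(G))$.
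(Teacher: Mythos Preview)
Your proof is correct and follows essentially the same route as the paper's. Both directions match: the lift for $\sgeb(\mathcal{S}(G)) \leq \sgeb(G)$ is identical, and for the reverse you count monochrome/bicolored pairs (the paper's $n_r,n_b,n_{rb}$ are your $a,b,k$), deduce $|a-b|\leq\epsilon|E|$ from balance in $\mathcal{S}(G)$, argue each bicolored pair can create at most one new cut among base vertices regardless of the chosen color, and then spend the free choices on balance with the same case split $\epsilon|E|\geq 1$ versus $\epsilon|E|=0$ (using the paper's standing convention that $\epsilon|E|$ is an integer). The only cosmetic difference is that the paper recolors the bicolored pairs one at a time inside $\mathcal{S}(G)$ before projecting back, whereas you project first and resolve the deferred choices globally; the underlying charging argument is the same.
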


	\begin{proof}
		Let $G = (V, E)$ and $\mathcal{S}(G) = (V', E')$.
		\begin{enumerate}
			\item[$(\geq)$] Let $\pi$ be an optimal coloring of $E$. We define
					a coloring $\pi'$ of $E'$ as follows. Let
					$e' = \{u, v_e\} \in E'$ with $u \in V \subseteq V'$ and
					$v_e \in V' \setminus V$. So $e'$ is half of the length-two
					path associated with $e$ in $G$. We set
					$\pi'(e') = \pi(e)$, that is, we give each edge in $E'$ the
					color of the edge in $E$ that induced it.

					Then no vertices in $V'\setminus V$ were cut, since
					both edges in a path have the same color, and the vertices
					cut by $\pi'$ in $V \subseteq V'$ are precisely those
					cut by $\pi$ in $V$.
			\item[$(\leq)$] Let $\pi'$ be any optimal coloring of $E'$. Now for
					each edge $e \in E$ there are three possibilities:
					\begin{itemize}
						\item Both corresponding edges in $E'$ are colored red
							by $\pi'$.
						\item Both corresponding edges in $E'$ are colored blue
							by $\pi'$.
						\item The corresponding edges in $E'$ are colored with
							two colors.
					\end{itemize}
					Let these quantities be counted by $n_r, n_b$ and $n_{rb}$,
					respectively. Then $|E| = n_r + n_b + n_{rb}$. Also the number
					of red and blue edges in $\pi'$ is counted by $2n_r + n_{rb}$
					and $2n_b + n_{rb}$, respectively. But by the
					balancing constraint we get
					$$|(2n_r + n_{rb}) - (2n_b + n_{rb})| \leq \epsilon|E'| = 2\epsilon|E|.$$

					From this we can derive that
					$|n_r - n_b| \leq \epsilon|E|$. That is, if we have an edge
					in $G$ whose associated pair in $\mathcal{S}(G)$ is monochrome,
					then copying these colors back into a coloring (for now ignoring
					those edges in $G$ whose associated pair is not monochrome)
					gives a coloring that already satisfies the load balancing
					constraint. This induced partial coloring will form the basis of
					our optimal coloring of $E$.
					
					What remains is to assign the
					bicolored pairs a color. We claim that we can recolor such
					a pair to a single color without increasing the volume of
					the solution. Indeed suppose without loss of generality
					that $e = \{u, w\} \in E$ satisfies $\pi'(\{u, v_e\}) = 1$
					and $\pi'(\{v_e, w\}) = 2$. Now construct a coloring
					$\pi^{\prime\prime}$ identical to $\pi'$ except that
					$\pi^{\prime\prime}(\{v_e, w\}) = 1$.

					So $\pi'$ and $\pi^{\prime\prime}$ are identical on the
					edges adjacent to $V'\setminus\{v_e, w\}$, and hence cut
					the same vertices here. Note that $\pi'$ also cuts $v_e$,
					but $\pi^{\prime\prime}$ does not. So even if $w$ now
					goes from `not cut' to `cut', this does not matter since
					`uncutting' $v_e$ compensates for this.

					Applying a change like this decreases $n_{rb}$ by one, and
					increases $n_r$ or $n_b$ by one (depending on the choice of
					color), therefore changing $|n_r - n_b|$ by one.
					Let us now distinguish two cases:

					\begin{description}
						\item[($\epsilon|E| \geq 1$)] Since a priori we have
							$|n_r - n_b| \leq \epsilon|E|$, we can
							repeatedly either increase $n_r$ or $n_b$
							by one while never violating the balancing
							constraint.
						\item[($\epsilon|E| = 0$)] In this case we must in
							fact have $$2n_r + n_{rb} = 2n_b + n_{rb}$$
							and hence $n_r = n_b$. Since $|E|$ is even
							(otherwise no partitioning exists), we get
							that $n_{rb} = |E| - (n_r + n_b)$ is even
							as well. Therefore we can pair up such
							bicolored pairs, making one of them fully red
							and one fully blue at each step, causing no
							change in the value of $|n_r - n_b|$.
					\end{description}

					In either situation we can find a new coloring $\pi^{\prime\prime}$
					of $E'$ with equal volume but with no bicolored pairs.
					Additionally, the number of monochrome pairs of each color
					satisfies the load balancing constraint
					$|n_r - n_b| \leq \epsilon|E|$ of the original graph. So we
					can then map this coloring to $E$, like in case $(\geq)$.
		\end{enumerate}
	\end{proof}

	\begin{figure*}[hbt!]
		\begin{tikzpicture}[box/.style={rectangle,draw=black,thick, minimum size=0.5cm}]
		
	\begin{scope}
		\node at (3.5, 0){\huge$\rightarrow$};
	\end{scope}

	\begin{scope}
		\node at (9, 0){\huge$\rightarrow$};
	\end{scope}

	\begin{scope}
		\node at (11.3, -2){\huge$\downarrow$};
	\end{scope}
			
	\begin{scope}
		\node at (3.5, -4.2){\huge$\leftarrow$};
	\end{scope}

	\begin{scope}
		\node at (9, -4.2){\huge$\leftarrow$};
	\end{scope}
		
	\begin{scope}
		\node[shape=circle,fill=white,draw=black,minimum size=8pt] (1) at ($(-0.5,{sqrt(3)*-0.5})$) {$v_1$};
		\node[shape=circle,fill=white,draw=black,minimum size=8pt] (2) at ($(-0.5,{sqrt(3)*0.5})$) {$v_2$};
		\node[shape=circle,fill=white,draw=black,minimum size=8pt] (3) at (1, 0){$v_3$};
		\node[shape=circle,fill=white,draw=black,minimum size=8pt] (4) at ($({1+sqrt(2)}, 0)$) {$v_4$};
		
		\draw[line width=1.5pt] (1) -- (2) -- (3) -- (1);
		\draw[line width=1.5pt] (3) -- (4);
	\end{scope}

	\begin{scope}[local bounding box=scope1,shift={(5, 0)}]
		\node[shape=circle,fill=white,draw=black,minimum size=8pt] (1) at ($(-0.5,{sqrt(3)*-0.5})$) {$v_1$};
		\node[shape=circle,fill=white,draw=black,minimum size=8pt] (5) at ($(-0.5,0)$) {$e_1$};
		\node[shape=circle,fill=white,draw=black,minimum size=8pt] (2) at ($(-0.5,{sqrt(3)*0.5})$) {$v_2$};
		\node[shape=circle,fill=white,draw=black,minimum size=8pt] (6) at ($(0.25,{sqrt(3)*0.25})$){$e_2$};
		\node[shape=circle,fill=white,draw=black,minimum size=8pt] (3) at (1, 0){$v_3$};
		\node[shape=circle,fill=white,draw=black,minimum size=8pt] (7) at ($(0.25,{sqrt(3)*-0.25})$){$e_3$};
		\node[shape=circle,fill=white,draw=black,minimum size=8pt] (4) at ($({1+sqrt(2)+0.4}, 0)$) {$v_4$};
		\node[shape=circle,fill=white,draw=black,minimum size=8pt] (8) at ($({1+0.5*sqrt(2)+0.2}, 0)$){$e_4$};
			
		\draw[line width=1.5pt] (1) -- (5) -- (2) -- (6) -- (3) -- (7) -- (1);
		\draw[line width=1.5pt] (3) -- (8) -- (4);
	\end{scope}

	\begin{scope}[local bounding box=scope1,shift={(5, -4.2)}]
		\node[shape=circle,fill=white,draw=black,minimum size=8pt] (1) at ($(-0.5,{sqrt(3)*-0.5})$) {$v_1$};
		\node[shape=circle,fill=white,draw=black,minimum size=8pt] (5) at ($(-0.5,0)$) {$e_1$};
		\node[shape=circle,fill=white,draw=black,minimum size=8pt] (2) at ($(-0.5,{sqrt(3)*0.5})$) {$v_2$};
		\node[shape=circle,fill=white,draw=black,minimum size=8pt] (6) at ($(0.25,{sqrt(3)*0.25})$){$e_2$};
		\node[shape=circle,fill=white,draw=black,minimum size=8pt] (3) at (1, 0){$v_3$};
		\node[shape=circle,fill=white,draw=black,minimum size=8pt] (7) at ($(0.25,{sqrt(3)*-0.25})$){$e_3$};
		\node[shape=circle,fill=white,draw=black,minimum size=8pt] (4) at ($({1+sqrt(2)+0.4}, 0)$) {$v_4$};
		\node[shape=circle,fill=white,draw=black,minimum size=8pt] (8) at ($({1+0.5*sqrt(2)+0.2}, 0)$){$e_4$};
			
		\draw[line width=3pt,draw=red] (1) -- (5) -- (2) -- (6) -- (3);
		\draw[line width=3pt,draw=blue] (1) -- (7) -- (3) -- (8) -- (4);
	\end{scope}
	\begin{scope}[shift={(0, -4.2)}]
		\node[shape=circle,fill=white,draw=black,minimum size=8pt] (1) at ($(-0.5,{sqrt(3)*-0.5})$) {$v_1$};
		\node[shape=circle,fill=white,draw=black,minimum size=8pt] (2) at ($(-0.5,{sqrt(3)*0.5})$) {$v_2$};
		\node[shape=circle,fill=white,draw=black,minimum size=8pt] (3) at (1, 0){$v_3$};
		\node[shape=circle,fill=white,draw=black,minimum size=8pt] (4) at ($({1+sqrt(2)}, 0)$) {$v_4$};
		
		\draw[line width=3pt,draw=red] (1) -- (2) -- (3);
		\draw[line width=3pt,draw=blue] (1) -- (3) -- (4);
	\end{scope}
	\begin{scope}[shift={(10.5,-0.7)}]
		\foreach \x in {0,0.5,1,1.5}{
			\foreach \y in {0,0.5,1,1.5}
				\node[box] at (\x,\y){};
		}
		\node[box,fill=gray] at (0	,1.5){};
		\node[box,fill=gray] at (1	,1.5){};
		\node[box,fill=gray] at (0	,1){};
		\node[box,fill=gray] at (0.5	,1){};
		\node[box,fill=gray] at (0.5	,0.5){};
		\node[box,fill=gray] at (1	,0.5){};
		\node[box,fill=gray] at (1.5	,0.5){};
		\node[box,fill=gray] at (1.5	,0){};
		\node[] at (-0.5,	0) {$v_4$};
		\node[] at (-0.5,	0.5) {$v_3$};
		\node[] at (-0.5,	1) {$v_2$};
		\node[] at (-0.5,	1.5) {$v_1$};
		\node[] at (0,	2) {$e_1$};
		\node[] at (0.5,	2) {$e_2$};
		\node[] at (1,	2) {$e_3$};
		\node[] at (1.5,	2) {$e_4$};
	\end{scope}
	\begin{scope}[shift={(10.5,-5)}]
		\foreach \x in {0,0.5,1,1.5}{
			\foreach \y in {0,0.5,1,1.5}
				\node[box] at (\x,\y){};
		}
		\node[box,fill=red] at (0	,1.5){};
		\node[box,fill=blue] at (1	,1.5){};
		\node[box,fill=red] at (0	,1){};
		\node[box,fill=red] at (0.5	,1){};
		\node[box,fill=red] at (0.5	,0.5){};
		\node[box,fill=blue] at (1	,0.5){};
		\node[box,fill=blue] at (1.5	,0.5){};
		\node[box,fill=blue] at (1.5	,0){};
		\node[] at (-0.5,	0) {$v_4$};
		\node[] at (-0.5,	0.5) {$v_3$};
		\node[] at (-0.5,	1) {$v_2$};
		\node[] at (-0.5,	1.5) {$v_1$};
		\node[] at (0,	2) {$e_1$};
		\node[] at (0.5,	2) {$e_2$};
		\node[] at (1,	2) {$e_3$};
		\node[] at (1.5,	2) {$e_4$};
	\end{scope}
\end{tikzpicture}
		\centering
		\caption{Solving \geb using \mbpt.}
		\label{figmgeb}
	\end{figure*}

	Now we can turn any graph into a bipartite graph without changing its
	smallest edge bisection. Using this we can prove the main theorem of this
	section:

	\begin{theorem}
		\label{main-thm}
		\mbpt is \NP-Complete.
	\end{theorem}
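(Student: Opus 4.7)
The plan is to exhibit a polynomial-time reduction from \geb (which has been shown \NP-complete) to \mbpt, using the \emph{edge-split graph} as the bridge. First, I would note that \mbpt is in \NP: given a candidate partition $(Z_1, Z_2)$ of the nonzeros, both the balance check and the computation of $VOL(Z_1, Z_2)$ (counting rows and columns touched by each side) take polynomial time, so a partition meeting the bound is a valid witness.

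For \NP-hardness, given an instance $(G, M)$ of \geb, I would form $\mathcal{S}(G)$ in polynomial time. By construction $\mathcal{S}(G)$ is bipartite with sides $V$ and $\{v_e \mid e \in E\}$, so we can view it as an $m \times n$ 0/1 matrix $A$ in the standard way of \autoref{mat-equiv-fig}: rows are indexed by $V$, columns by $\{v_e\}$, and the nonzero set $Z$ is exactly $E'$, so $|Z| = 2|E|$. This construction is clearly polynomial.

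Next I would check that \mbpt on $(A, M)$ is just \geb on $(\mathcal{S}(G), M)$ in disguise. A disjoint partition of $Z$ corresponds one-to-one with a disjoint partition of $E'$, the balance condition $|Z_i| \leq (1+\epsilon)|Z|/2$ is identical to $|E'_i| \leq (1+\epsilon)|E'|/2$, and a row or column of $A$ is ``cut'' (i.e. has nonzeros on both sides) precisely when its corresponding vertex in $\mathcal{S}(G)$ is cut. Hence $VOL(Z_1,Z_2)$ equals the number of cut vertices of $\mathcal{S}(G)$ under the induced edge coloring, so the two problems have identical yes-instances. Combining with \autoref{split-graph}, which yields $\sgeb(G) = \sgeb(\mathcal{S}(G))$, we conclude that $(G,M)$ is a yes-instance of \geb iff $(A, M)$ is a yes-instance of \mbpt. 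Together with the previously established \NP-completeness of \geb, this finishes the proof.

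The main subtlety (already handled in the earlier propositions) is ensuring that the balance constraint transfers correctly through the edge-split construction; once that equivalence is in hand, the matrix-graph correspondence is purely formal. The one thing I would double-check in writing is that the reduction preserves the \emph{same} value of $\epsilon$: this holds because $|E'| = 2|E|$ and \autoref{split-graph} was proven for the same $\epsilon$ on both sides, so no rescaling is required to land inside the class of \mbpt instances parametrized by the fixed $\epsilon$ of the problem.
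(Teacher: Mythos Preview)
Your proposal is correct and follows essentially the same route as the paper: reduce \geb to \mbpt by passing through the edge-split graph $\mathcal{S}(G)$, invoking \autoref{split-graph}, and using the bipartite-graph/matrix correspondence to translate edge colorings into nonzero partitionings with identical balance and volume. You are in fact slightly more thorough than the paper's own proof, since you explicitly verify membership in \NP\ and check that the same $\epsilon$ carries through the construction.
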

	\begin{proof}
		We will show that $$\geb \leq_{\mathcal{P}} \mbpt$$ Given a graph $G$,
		let $G'=\mathcal{S}(G)=(L\cup R, E')$. Create a $|L| \times |R|$ matrix
		$A$ with $A_{ij} = 1$ if $i \in L$ and $j \in R$ are connected by an edge
		from $E'$, and $0$ otherwise.

		We can now solve the \mbpt problem on $A$, and using the correspondence
		between matrices and bipartite graphs described in
		\autoref{1-preliminaries}, we can turn this into an $\epsilon$-balanced
		partitioning of
		$E'$, since we have a correspondence between the partitioning of edges
		and the partitioning of nonzeros, and a correspondence between cutting
		vertices and cutting rows and columns. This is displayed in \autoref{figmgeb}.

		Now \autoref{split-graph} and its proof give us a constructive algorithm to
		transition between $G'$ and $G$, solving \geb on $G$.
	\end{proof}

	\section{Exact Algorithm}
\label{sec:opt}

In this section we give an exact algorithm for finding an optimal
$\epsilon$-balanced bipartitioning of a matrix, extending the
branch-and-bound
algorithm by Pelt and Bisseling \cite{pelt15}. They approach the
problem with a branch-and-bound technique and use combinatorial
bounds to bound the volume of a partial partitioning from below.
A major drawback of their bounds however, is that they are in some
sense `local'. They only consider the direct neighbourhoods of
assigned sections of the matrix (or rather, its underlying graph).
We extend both bounds to the entire graph to take
full advantages of its connectivity. We will first briefly discuss
the branch-and-bound algorithm, before looking at the two classes
of bounds.

\subsection{Branch and Bound}
Recall that a branch-and-bound algorithm initially starts with
(a representation of) the whole solution space, and then repeatedly
branches on properties of the solutions until these are refined
enough that they specify a single solution (this is a leaf in the
branch-and-bound tree). When the properties are chosen carefully, we
may prune (`bound') large parts of the search tree well before we reach
a leaf.

In the case of matrix bipartitioning, a first obvious choice would 
be to branch on which partition to put each nonzero in. For an $m \times n$
matrix with $N$ nonzeros this results in $2^N$ leaf nodes. However,
this is not our only option.
Instead, we can branch on the status of each of the rows and columns of
the matrix: each of them is either entirely red, entirely blue, or `cut',
i.e. it contains both colors. As a result, we only have $3^{m+n}$ leaves,
which is already smaller than $2^N$ when $m + n < \log_3(2) N \approx 0.63N$.
In fact, not all of the $3^{m+n}$ states are even reachable: if a row
and column intersect in a nonzero, we cannot mark one of them as
\textit{red} and one of them as \textit{blue}
(i.e., we do require assignments to be consistent).

When we traverse the branch-and-bound tree, at each stage we have a
`partial assignment', where some of the rows and columns are
\textit{red}, \textit{blue}, or \textit{cut}, and some are still
unassigned. For a given matrix $A$ and its bipartite graph representation
$G(A) = (V, E)$ (recall the equivalence from \autoref{sec:np}),
we will write $R \subseteq V$ (resp. $B, C \subseteq V$)
for the vertices (corresponding to rows and columns) that were assigned
\textit{red} (resp. \textit{blue}, \textit{cut}). Additionally, while
all remaining vertices are unassigned, they may still be connected to
vertices in $R, B$ and $C$. For example, if an unassigned column
vertex $u$ is adjacent to a row vertex $r \in R$, this means that $A_{ru}$
is nonzero. In particular, since row $r$ is red, we cannot make $u$
blue. So we will call $u$ \textit{partially red}, with the corresponding
subset of $V$ written as $P_R$ (with $P_B$ defined analogously).
Finally, an unassigned vertex may have neighbours in both $R$ and $B$.
Although it is unassigned, we are forced to cut this vertex.
Because of this, we assume that whenever such a vertex is created, we
immediately cut it. Therefore we will ignore these vertices.

Now we need to find a lower bound for any extension to a partial partition. First
we can take $|C|$ as our starting point. We then turn to the unassigned
region of the bipartite graph (matrix) to derive stronger bounds.

\subsection{Flow Bounds}

\begin{figure*}[htb!]
	\centering
	\begin{tikzpicture}[box/.style={rectangle,draw=black,thick, minimum size=1cm}]

\begin{scope}[shift={(-7, 1)}]
\foreach \x in {0,1,2}{
	\foreach \y in {0,1,2}
		\node[box] at (\x,\y){};
}
\node[box,fill=red] at (0,0){};
\node[box,fill=lightgray] at (0,2){};
\node[box,fill=blue] at (1,1){};
\node[box,fill=lightgray] at (2,1){};
\node[box,fill=lightgray] at (2,2){};
\node[] at (-1, 0) {$0$};
\node[] at (-1, 1) {$-$};
\node[] at (-1, 2) {$-$};
\node[] at (0, 3) {$-$};
\node[] at (1, 3) {$1$};
\node[] at (2, 3) {$-$};
\end{scope}

\node[shape=circle,fill=red  ,draw=red  ,minimum size=13pt] (1) at (0, 0) {};
\node[shape=circle,fill=black,draw=black,minimum size=13pt] (4) at (3, 0) {};
\node[shape=circle,fill=black,draw=black,minimum size=13pt] (2) at (0, 2) {};
\node[shape=circle,fill=blue ,draw=blue ,minimum size=13pt] (5) at (3, 2) {};
\node[shape=circle,fill=black,draw=black,minimum size=13pt] (3) at (0, 4) {};
\node[shape=circle,fill=black,draw=black,minimum size=13pt] (6) at (3, 4) {};

\node[] at(-1, 0) {Row $3$};
\node[] at(-1, 2) {Row $2$};
\node[] at(-1, 4) {Row $1$};
\node[] at( 4.3, 0) {Column $3$};
\node[] at( 4.3, 2) {Column $2$};
\node[] at( 4.3, 4) {Column $1$};

\draw[line width=2pt] (1) -- (6) -- (3) -- (4) -- (2) -- (5);

\end{tikzpicture}
	\caption{Looking at the matrix, it is not a priori clear that row 3
		and column 2 are connected. In the graph representation of the
		matrix we can clearly see the connecting path.}
	\label{fig-flow}
\end{figure*}
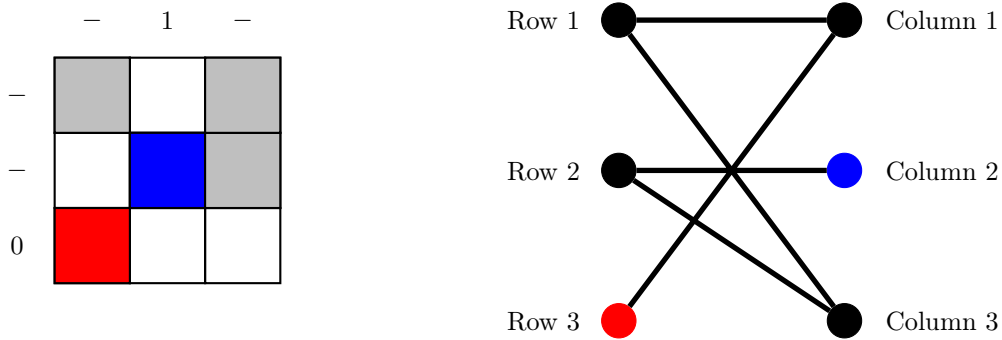

The first way to find a lower bound on a given partial partitioning
is to consider
the connectivity of the underlying graph. The original
algorithm~\cite{pelt15} uses the following matching bound:
let us consider some $p \in P_R$ and
$q \in P_B$ with $\{p, q\} \in E$. We note that $p$ has a red
edge adjacent (through its adjacent vertex in $R$), and $q$ has
a blue edge adjacent. Clearly no matter what color we give
$\{p, q\}$, we will have to cut one of $p$ and $q$.

We can improve upon this by finding a maximal set of such edges that
are vertex-disjoint (this is necessary, since otherwise we could
resolve two edges $\{p, q\}$ and $\{p, q'\}$ just by cutting one
vertex $p$).
The relevant graph is just our bipartite graph
$G(A) = (V, E)$ restricted to
$P_R \cup P_B \subseteq V$, keeping only edges with one
endpoint in $P_R$ and one in $P_B$. Bipartite
matching is a classical problem we can solve in polynomial
time. Then the size of the maximum matching is a lower bound
on the number of vertices that still have to be cut by any
extension of the current partial assignment.

This bound does not take full advantage of the connectivity
of the graph since it only considers single edges that are
adjacent to both a red and a blue edge (through $P_R$ and
$P_B$). There is no need to restrict ourselves to single edges
however, and this is especially clear when we consider the graph
formulation of the problem. We now introduce a new flow bound.

Consider a path from $P_R$ to $P_B$ avoiding $R \cup B \cup C$, that
is, a sequence of vertices $v_1, v_2, \dots, v_k$ with
$v_i \not\in R\cup B\cup C$ for $1 \leq i\leq k$, and in particular
$v_1 \in P_R$ and $v_k \in P_B$, such that $\{v_i, v_{i+1}\} \in E$
for all $1 \leq i < k$. This corresponds to a series of intersecting
rows and columns, the first of which contains a red nonzero
(corresponding to the edge between $v_1$ and its neighbour in $R$)
and the last a blue nonzero. One can see this in \autoref{fig-flow}.
But then any extension of the current partial assignment that colors
(assigns) all remaining vertices, must cut one
of these vertices. If $v_1$ is not cut then it must be fully red, making
$v_2$ partially red, and so on.

So if a single path between $P_R$ and $P_B$ implies that we have to cut at
least one vertex, how do we extend this to multiple paths? Here we run in
to the same issue as with the matching bound: if we have two paths that
share a vertex, we can just cut that vertex to separate red and blue edges,
for a cost of $1$.

Hence, to prove a lower bound of more than $1$ we have to require that the
paths are disjoint. In particular, the best lower bound we can hope for using
this technique
will be the maximum number of vertex-disjoint paths between $P_R$ and $P_B$.
Note that these paths must be vertex-disjoint in $P_R$ and $P_B$ as well,
since those vertices may also still be cut. Alternatively, when we imagine
$R$ and $B$ as a single vertex, we are looking for a maximum number of paths
from $R$ to $B$ that are vertex-disjoint outside of $R$ and $B$. In fact,
a theorem by Menger shows that this is actually exactly the size of the
smallest vertex cut.

\begin{theorem}{(Menger \cite{menger1927allgemeinen})}
	Let $G$ be a finite undirected graph, and let $u$ and $v$ be two
	non-adjacent vertices in $G$, then the size of the smallest vertex cut
	separating $u$ and $v$ equals the maximum number of vertex-disjoint
	paths between $u$ and $v$.
\end{theorem}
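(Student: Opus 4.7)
The plan is to prove Menger's theorem by establishing both inequalities separately, with the harder direction reduced to the max-flow min-cut theorem applied to a suitable auxiliary digraph.

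First I would handle the easy direction, that the maximum number of internally vertex-disjoint $u$-$v$ paths is at most the minimum size of a $u$-$v$ vertex cut. Given any collection $\mathcal{P}$ of $k$ internally vertex-disjoint $u$-$v$ paths and any vertex set $S \subseteq V \setminus \{u, v\}$ whose removal disconnects $u$ from $v$, each path in $\mathcal{P}$ must contain at least one internal vertex of $S$ (otherwise that path would survive in $G - S$). By internal vertex-disjointness, the vertices of $S$ hit by different paths are distinct, so $k \leq |S|$.

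For the reverse inequality I would reduce to max-flow. Construct a directed auxiliary graph $G'$ as follows: keep $u$ as source and $v$ as sink; replace each other vertex $w$ by two vertices $w_{\mathrm{in}}, w_{\mathrm{out}}$ joined by a directed arc of capacity $1$; for every undirected edge $\{x, y\} \in E$, add directed arcs $(x_{\mathrm{out}}, y_{\mathrm{in}})$ and $(y_{\mathrm{out}}, x_{\mathrm{in}})$ of infinite capacity (with $u, v$ acting as their own in/out copies). By max-flow min-cut, the value of a maximum $u$-$v$ flow equals the capacity of a minimum $u$-$v$ cut in $G'$. A minimum cut must consist entirely of unit-capacity split arcs $(w_{\mathrm{in}}, w_{\mathrm{out}})$, because some finite cut exists (since $u$ and $v$ are non-adjacent in $G$, the collection of all split arcs already separates them), so infinite-capacity arcs cannot appear in an optimal cut. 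The chosen split arcs correspond bijectively to an internal vertex cut of $G$ of the same size. Conversely, since all capacities are integral, by integrality of max-flow the flow decomposes into $k$ edge-disjoint $u$-$v$ paths in $G'$; each such path uses each split arc at most once, so projecting back to $G$ yields $k$ internally vertex-disjoint $u$-$v$ paths.

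The main obstacle is keeping the correspondence between cuts/flows in $G'$ and vertex cuts/paths in $G$ clean. Specifically, one must verify (i) that an optimal cut in $G'$ avoids all infinite-capacity arcs, which uses the non-adjacency of $u$ and $v$ to guarantee that the split arcs alone form a feasible finite cut, and (ii) that the path decomposition of an integral flow never routes two units through the same split arc, which follows from its unit capacity. Combining (i) and (ii) with the easy direction yields equality of the two quantities, proving Menger's theorem.
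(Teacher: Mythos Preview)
Your proof proposal is correct and follows the standard modern route to Menger's theorem via the max-flow min-cut theorem with the vertex-splitting gadget. However, the paper does not actually prove this statement: it is quoted as a classical result with a citation to Menger's original 1927 paper and used as a black box to justify that the flow bound equals the minimum vertex cut between the red and blue regions. There is therefore no proof in the paper to compare against; your argument simply supplies one where the paper chose to cite.
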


Unfortunately, the resulting bipartitioning may be very unbalanced. Intuitively,
the flow bound might be small if there is a chokepoint between $R$ and $B$.
However, the actual optimal bipartitioning might be much larger if this chokepoint
is biased towards one of $R$, $B$. We would like to correct for this by adding
another bound which considers the sizes of the neighbourhoods of $R$ and $B$
rather than their connectivity. This motivates the next class of bounds we
introduce.

\subsection{Packing Bounds}
The second bound used by Pelt and Bisseling~\cite{pelt15} is a local packing bound.
Let $E(R)$ denote all edges that are colored red by the current partial assignment
(that is, all edges adjacent to a vertex in $R$). For each partially red vertex
$p \in P_R$, let $$N_{\text{free}}(p) = \{\, e\in E \,|\,
	\text{$e$ is adjacent to $p$, $e\not\in E(R)$}\}$$ denote all its adjacent
unassigned edges. Note that all these edges are adjacent to a red edge (through
$p$). If we do not want to cut any more vertices, we have to color all of
$N_{\text{free}}(p)$ red. But if $|E(R)| + \sum_{p \in P_R} |N_{\text{free}}(p)|$
is greater than $(1+\epsilon)\tfrac{|E|}{2}$ this will lead to an unbalanced
partition, and so we are forced to cut some of the vertices $p\in P_R$. Since we are
looking for a lower bound, we can greedily take those vertices with largest
$|N_{\text{free}}(p)|$ until the sum is small enough again.

Note that we implicitly assume all $N_{\text{free}}(p)$ are disjoint, so that we
can assign their edges independently. Since $G(A) = (V, E)$ is bipartite, this is
true if we consider each side of the bipartition separately (that is, the rows and
the columns). We can then do the same for $P_B$, and add all unavoidable cuts together
to get the packing bound.

As with the matching bound this bound only considers the direct neighbourhood of 
$R \cup B$ and leaves large sections of the graph unexamined. Here too, we use the
fact that a path between a red edge and blue edge must have a cut somewhere in between
to introduce a new, stronger packing bound.
In particular we will look at whole subgraphs adjacent to $P_R$, rather than only
at free edges incident to vertices in $P_R$. This bound is based
on a similar approach taken by Delling and coworkers~\cite{delling14}.

\begin{definition}
	Given a graph $(V, E)$ with a partial assignment $R, B, C \subseteq V$,
	then an \emph{$R$-adjacent subgraph} $(V', E')$ is a tuple of subsets
	$V' \subseteq V$, $E' \subseteq E$, satisfying the following
	properties:
	\begin{itemize}
		\item[(1)] $V'$ is disjoint from $R \cup B \cup C$.
		\item[(2)] For any distinct $e_1, e_2 \in E'$ such that $u \in V$
			is an endpoint of both $e_1$ and $e_2$, we have $u \in V'$.
		\item[(3)] $(V', E')$ is path-connected with respect to edges, i.e.
			for any $e_1, e_2 \in E'$ we can find $f_1, \dots, f_k \in E'$
			pairwise incident, with $e_1 = f_1$ and $f_k = e_2$.
		\item[(4)] $(V', E')$ is adjacent to $R$ ($V' \cap P_R \neq \emptyset$).
		\item[(5)] All edges in $E'$ are free in the partial partitioning
			$R, B, C$ (that is, no edges are adjacent to $R$ or $B$).
	\end{itemize}
\end{definition}

We can now use these subgraphs to find a lower bound on any extension of our partial
assignment. Indeed, notice that for any edge in $E'$ we can find a path (property 3)
to an edge adjacent to $R$ (property 4) with all internal vertices in $V'$ (property 2)
and unassigned (property 1). Therefore coloring at least one of these edges blue
requires us to cut at least one vertex in $V'$.

Note that the definition intentionally does not require us to add both endpoints
of an edge in $E'$ to our vertex set $V'$. This is because if we have an edge
between our $R$-adjacent subgraph and $C$, this edge is still unassigned and we
can therefore still find a path containing an internal vertex that has to be cut,
in the event that this edge is colored blue.

We can now formulate a strategy to find a new, \textit{extended packing bound}:
we find a maximal collection of $R$-adjacent
subgraphs $(V_1, E_1), (V_2, E_2), \dots (V_k, E_k)$ that are pairwise disjoint.
Denoting by $E(R)$ all edges incident to $R$, and noticing that all edges in $E_i$
are unassigned by the current partial assignment $R, B, C$ (property 5),
then if we do not want to cut any more vertices and color all of the $R$-adjacent
subgraphs red, this results in $|E(R)| + \sum_{i=1}^k |E_i|$ red edges. If this is
larger than $(1+\epsilon)\tfrac{|E|}{2}$, any resulting partitioning would be
unbalanced, and so we must cut some of the subgraphs. To find a lower bound we can
again assume the ideal case where we cut the largest subgraphs (in terms of $|E_i|$)
first, at a cost of one cut per subgraph. We can compute a similar quantity
for $B$-adjacent subgraphs and add the results together for the extended packing bound.

\subsection{Implementation notes}
While the previous sections describe how to lower bound the volume given the
\textit{existence} of these specific paths and subgraphs, we are also interested in
\textit{finding} them. Additionally, there are various decisions one needs to make
in the implementation process of a branch-and-bound algorithm, which we detail below
in the interest of reproducibility.

\subsubsection{Branching strategy}
While we already specified that we branch on marking a row or column as
\textit{red}, \textit{blue}, or \textit{cut}, the order in which we
select the rows and columns for branching could significantly affect the
performance of the algorithm as this selection prescribes the order in which the
entire search space is traversed.

Intuitively, it makes sense to branch on rows and columns with more free
nonzeros, since their assignment affects the balance of the bipartitioning
the most, and their high connectivity suggests they may be useful as
sources of paths in the Flow Bound or subgraphs in the Extended Packing
Bound. Thus, at each step we select for branching a row or column $u$
with a maximal number of unassigned nonzeros, breaking ties arbitrarily.
Additionally, since the goal is to cut as few rows and columns as possible,
we traverse the `\textit{cut}' subtree last, and since the goal is to balance
the bipartition, we traverse first the subtree that assigns $u$ to the smallest
side in the bipartition. We note that the previous program, MondriaanOpt, employs the
same ordering strategy. As a remark, we refer to the paper by
Mumcuyan et. al.~\cite{mumcuyan18} who show other branching strategies
can be faster, and learn to predict the optimal strategy based on matrix
statistics. It turns out that the strategy that performs best on average
is exactly the one we implemented (branching on `\textit{cut}' vertices
last and ordering by number of nonzeros).

\subsubsection{Initial upper bound}
To correctly prune, our branch-and-bound algorithm needs an upper bound to
compare its lower bound against. Before we have found our first feasible
solution we could use the trivial $\min(m,n) + 1$ upper bound. Although
this upper bound is in some sense tight (consider an odd square matrix
with only one zero), for sparse matrices it is usually quite bad and
forces our algorithm to consider many suboptimal solutions before arriving
at better ones.

Instead, we would like to run the algorithm with an upper bound as tight
as possible. Hence, we use a technique common in branch-and-bound algorithms
where we run our algorithm with an initial (strict) upper bound of $U_1 = 1$,
and rerun with $U_{i+1} = \lceil \frac{5}{4} U_i \rceil$ until we have found
a solution. One may interpret this search technique as a variation of
\textit{Iterative Deepening A*}, considering volume instead of path length.

\subsubsection{Implementing the flow bound}
Finding a maximal set of vertex-disjoint paths is a classical maximum flow
problem. We can solve it by duplicating each vertex, and connecting the resulting
vertex pairs
with a capacity $1$ edge to enforce that every vertex be used only once.
A thorough exposition of this may be found in
the book by Cormen et al.~\cite[chapter 26]{thomas2001introduction}.

To leave as many edges for the packing bound as possible, we compute the
maximum flow using shortest paths~\cite{dinic1970algorithm}. Finally, while
the maximum flow problem may be solved in polynomial time, it still requires
computation over the entire graph, which may slow down our algorithm for large
matrices. Instead, we reuse the flow from the previous step in the
branch-and-bound algorithm. Recomputing the flow then involves finding
augmenting paths involving the modified vertex, which speeds up computation
considerably especially when this subgraph is small, lower in the search tree.

\subsubsection{Extended Packing Bound}
The quality of this lower bound is highly dependent on the relative
sizes of the subgraphs. Ideally, we would like the size of the largest subgraphs
to be as small as possible, so we are forced to cut many of them to balance the
partitioning. Thus, we start a depth-first search from all vertices in $R$
simultaneously, updating each of the corresponding search trees one by one
(cycling through them using, for example, a queue) until all searches have
terminated.

\subsubsection{Combining bounds}
A priori, the flow and extended packing bound conflict with each other, so
if we compute both we would be relegated to taking the maximum of the two
(and then adding $|C|$). However we can first compute the flow bound, remove
the set of paths it found from the graph, and then run the packing bound on
the remainder of the graph. Then the packing bound will only use edges not used
by the flow bound and we can add the two bounds together.

Additionally while the combined flow and extended packing bound together should
give the strongest lower bound, they are also the most expensive to compute.
Therefore we also compute the weaker, local packing bound~\cite{pelt15}, as the
overhead to compute this is negligible. In particular, we do this computation
incrementally: if this local packing bound on its own already indicates this
subtree should be pruned, we do so without computing any other bounds. Then we
compute the flow bound and ask the same question, and then finally the extended
packing bound.

	\section{Experimental results}
\label{sec:experiments}
We implemented the branch-and-bound algorithm from \autoref{sec:opt} in our new program
MP\footnote{MP is available from \url{https://github.com/TimonKnigge/matrix-partitioner}.}. The implementation was
done in C++14, and the final program was compiled with GNU GCC Version 7.1.0 with the \texttt{-O2} flag. The
program was written sequentially, but as a branch-and-bound algorithm it can easily be parallelized in the future.

To test the capabilities of the new exact matrix bipartitioner MP and to compare it with MondriaanOpt, we performed
numerical experiments on a subset of small and medium-sized test matrices from the SuiteSparse Matrix Collection
(formerly known as the University of Florida Sparse Matrix Collection~\cite{davis11}). We chose as test set the
subset of all sparse matrices with at most 100,000 nonzeros, which contains 1602 matrices\footnote{After having removed five duplicate matrices:
	\texttt{Pothen/\-barth}, \texttt{Pothen/\-barth4}, \texttt{Mesz\-aros/\-fxm3\_6},
	\texttt{Boeing/\-nasa\-1824}, \texttt{Pajek/\-foot\-ball}.}\footnote{Retrieved September 2018.}.
We chose a value of $\epsilon=0.03$ in \autoref{eqn:imbal}, which is a common value in the literature
allowing a trade-off between load imbalance and communication volume. To keep the total CPU time used within
reasonable bounds, we allotted a maximum of 24 hours of CPU time to each partitioning run.

All computations were carried out on thin nodes with 24 cores of the Dutch national supercomputer Cartesius at
SURFsara in Amsterdam, with a core clock speed of 2.4 GHz (for Intel Ivy Bridge E5-2695 v2  CPUs) or 2.6 GHz (for
Haswell Bridge E5-2690 v3 CPUs). The memory for each thin node is 64 GB. Each batch of 24 jobs is assigned to a
node by a runtime scheduler, which may lead to different types of CPUs being used, causing a slight inconsistency
in our timings. The scheduler carried out all MP runs on the slower Ivy Bridge nodes and all MondriaanOpt nodes on
the faster Haswell nodes. Since this distribution favors MondriaanOpt, we provide all results and runtimes
uncorrected for this fact. For the sake of completeness, we performed calibration runs for the 40 longest running
matrices using MP on both types of nodes of the Cartesius computer, and based on the geometric mean for this set,
we found that the Haswell nodes are a factor of  $\alpha= 1.1782$ times faster than the Ivy Bridge nodes.

As a result of our experiments, we may divide the matrices into three groups: (i) a group of 368 matrices which
could be solved by both programs, MP and MondriaanOpt. We use these matrices to compare the speed of the two
programs and to verify their correctness; (ii) a group of 471 matrices which could only be solved by MP;
(iii) a group of 763 matrices which could not be solved by either program.
All matrices that could be solved by MondriaanOpt within 24 hours could also be solved by MP within 24 hours.

For the 368 matrices that could be solved by both programs, all optimal volumes computed are identical for the two
programs, which we take as an independent mutual confirmation of their correctness. We have taken great care
in developing the programs to make them understandable and to document them well, to support our claim that they
compute exact, optimal solutions. Both programs are available as open-source software and are open to inspection
for correctness. The two programs do not necessarily compute the same solution, as there may be several optimal
solutions. The optimal volume, however, is of course unique.

Of the 368 matrices solved by both programs, MP is faster than MondriaanOpt in 306 cases (83\% of the cases).
In 25 cases (6.7\%), it performs equally well, of which 22 cases with both programs needing exactly 1 second (our
clock resolution), and having a volume of 0. In 37 cases (10\%), MondriaanOpt is faster, of which 31 cases with
volume 0. For volume 0, the sparse matrix can be split into several connected components (when viewed as a graph)
of suitable sizes. This situation is easy to handle and it is quickly discovered by both programs. In general,
there is a trade\-off where the time spent computing sharper lower bounds should be compensated for by a sufficient
reduction in the size of the branch-and-bound tree. In our situation this appears to be the case in a large
majority of test matrices.

We synthesize the results of our experiments in \autoref{plotfig} below. With the exception of two matrices,
\texttt{mhd4800b} (29574s) and \texttt{mhd3200b} (7977s), all matrices solved by MondriaanOpt in a day could be
solved by MP in an hour, all but nine of them in fifteen minutes, and all but twenty in two minutes.
The geometric average of the ratio $T_{\mathrm{MP}} / T_{\mathrm{Opt}}$ between the time of MP and the time of
MondriaanOpt is $0.0855$. This average is based on 286 matrices that could be solved by both programs and for
which $T_{\mathrm{MP}}, T_{\mathrm{Opt}}  \geq 1$ second.
This means that on average, MP is more than ten times faster
than MondriaanOpt.

\begin{figure}[htb]
	\centering
	\def\svgwidth{0.5\textwidth}
	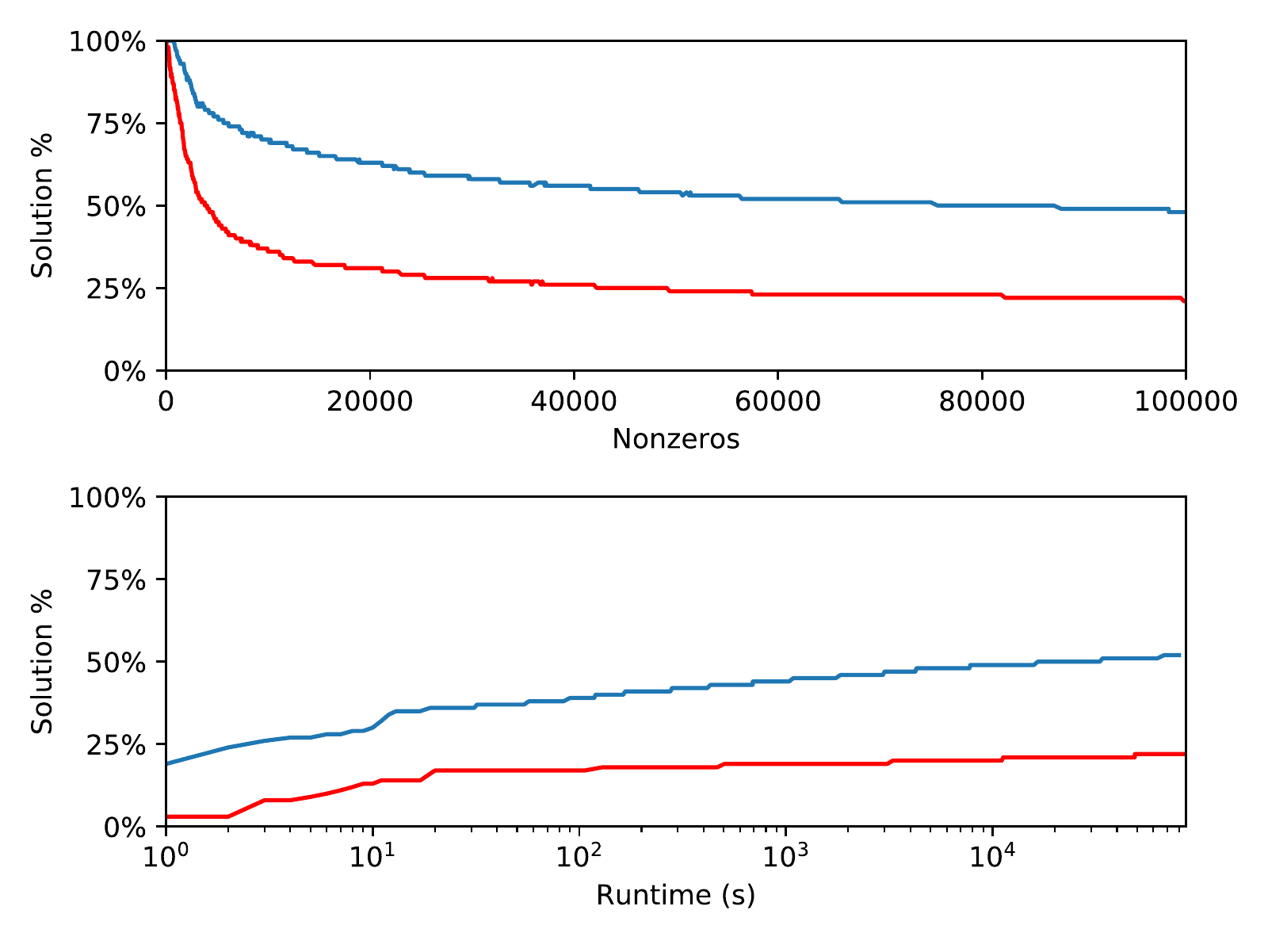
	\caption{In both plots, the blue graph is for \textbf{\color{blue}MP}, and red for
		\textbf{\color{red}MondriaanOpt}. The top plot shows for a
		given number of nonzeros $N$, what percentage of matrices with at most $N$ nonzeros were solved in
		the alloted time. The bottom plot shows for a given time in seconds $t$, the percentage of matrices
		in the whole test set that was solved in $t$ seconds. Note the log\-scale for $t$.}
	\label{plotfig}
\end{figure}

\autoref{tbl:matrices} shows the hardest (longest running) cases that MP could still solve within our self-imposed 
time limit.
These are in fact the matrices that take between 3 and 24 hours to be solved. Note that there is no simple
parameter that characterises the most difficult matrices for bipartitioning. Still we can say that the hardest
solvable matrices usually have a communication volume of 10 or more, with the exception of the matrix
\texttt{mhd4800b}, which has a low volume of 2. Furthermore, they also have at least 1000 nonzeros, with two
exceptions, \texttt{ch4-4-b2} and \texttt{GD97\_a}. This particular top-40 set is our challenge to future exact
partitioners.

In addition to the literature standard of $\epsilon=0.03$,
we ran this top 40 of matrices for $\epsilon = 0$ and $\epsilon=0.1$ as well, to see what the effect
of the choice of $\epsilon$ might have on these hard matrices. Results are also found in \autoref{tbl:matrices}.
We see that in many cases, the volume changes only slightly or not at all. There are several notable exceptions
however, where the larger imbalance allows for lower volumes which can also be found faster, for example for the
\texttt{iiasa} matrix. The comparative runtimes vary much more. Increasing $\epsilon$ will grow the size of the
solution space since more partitionings now satisfy the load balancing constraint. Conversely, this may allow us to
find lower volume solutions that we can use to prune larger parts of the search tree. Which of these effects is
stronger depends on the matrix, even in a non-linear fashion as can be seen from the results of \texttt{lp\_grow22}
where the choice of $\epsilon=0.03$ appears especially unfortunate\footnote{Do note the selection bias here.
Among the matrices that are very time consuming to solve for $\epsilon=0.1$ we might similarly find a matrix
solved very rapidly for $\epsilon=0.03$.}.

\begin{table*}[p]
\caption{The top-40 of matrices with the longest computation time needed by the matrix partitioner MP for
	$\epsilon=0.03$. Given are the matrix name, number of rows, columns, and nonzeros, and for each
	$\epsilon\in\{0,0.03,0.1\}$ the optimal communication volume and CPU time (in s) needed for computing
	an optimal solution. In case a matrix was not solved within the time limit of 24 hours, a dash (--)
	appears. Of these matrices, MondriaanOpt only solved \texttt{mhd4800b} for $\epsilon=0.03$ (in 55001s).}
\label{tbl:matrices}

\begin{center}
\begin{tabular}{lrrrrrrrrr}
	\hline
	Name&
	\multicolumn{1}{c}{$m$}&
	\multicolumn{1}{c}{$n$}&
	\multicolumn{1}{c}{$nz$}&
	\multicolumn{2}{c}{$\epsilon=0$}&
	\multicolumn{2}{c}{$\epsilon=0.03$}&
	\multicolumn{2}{c}{$\epsilon=0.1$}\\
	\cline{5-6}\cline{7-8}\cline{9-10}
	&&&&$V$&$T$&$V$&$T$&$V$&$T$\\
	\hline
	\hline

	\texttt{c-28}&4598&4598&30590&10&13275&10&11470&10&5494\\
	\texttt{mhd1280a}&1280&1280&47906&--&--&44&12375&--&--\\
	\texttt{bp\_1000}&822&822&4661&35&4058&35&13945&--&--\\
	\texttt{reorientation\_4}&2717&2717&33630&--&--&14&14680&14&28025\\
	\texttt{DK01R}&903&903&11766&--&--&20&15070&20&8527\\
	\texttt{west0479}&479&479&1910&33&20982&33&15156&33&46986\\
	\texttt{ch4-4-b2}&96&72&288&24&23469&24&15955&24&39888\\
	\texttt{celegansneural}&294&270&2345&58&10142&57&16203&--&--\\
	\texttt{lp\_stocfor3}&16675&23541&76473&14&13757&14&18138&14&37751\\
	\texttt{bayer02}&13935&13935&63679&28&18643&27&18832&27&30701\\
	\texttt{circuit204}&1020&1020&5883&--&--&41&19078&40&35546\\
	\texttt{orbitRaising\_4}&915&915&7790&--&--&16&19405&8&14\\
	\texttt{GD97\_a}&84&84&332&24&17958&24&20024&24&33897\\
	\texttt{lp\_modszk1}&686&1620&3168&34&60417&34&20768&34&45698\\
	\texttt{mhd4800b}&4800&4800&27520&2&23728&2&23298&0&12\\
	\texttt{Hamrle2}&5952&5952&22162&16&13270&16&24322&16&6156\\
	\texttt{dynamicSoaringProblem\_4}&3191&3191&36516&--&--&22&24533&22&42353\\
	\texttt{pcb1000}&1565&2820&20463&41&37923&40&24601&40&35283\\
	\texttt{lp\_grow22}&440&946&8252&20&340&20&25148&20&34\\
	\texttt{kineticBatchReactor\_5}&7641&7641&80767&--&--&18&26139&18&42805\\
	\texttt{can\_256}&256&256&2916&44&4193&43&28300&40&29983\\
	\texttt{qiulp}&1192&1900&4492&--&--&40&30445&--&--\\
	\texttt{ex21}&656&656&19144&--&--&62&31311&--&--\\
	\texttt{lp\_bnl1}&642&1586&5532&--&--&47&31663&--&--\\
	\texttt{c-29}&5033&5033&43731&--&--&28&32127&--&--\\
	\texttt{fs\_541\_1}&541&541&4285&37&23259&37&33046&--&--\\
	\texttt{fs\_541\_4}&541&541&4285&37&23247&37&33181&--&--\\
	\texttt{fs\_541\_2}&541&541&4285&37&23529&37&33264&--&--\\
	\texttt{fs\_541\_3}&541&541&4285&37&23303&37&33377&--&--\\
	\texttt{bp\_600}&822&822&4172&--&--&33&34244&--&--\\
	\texttt{model1}&362&798&3028&47&17281&46&37081&--&--\\
	\texttt{kineticBatchReactor\_9}&8115&8115&86183&--&--&18&44655&--&--\\
	\texttt{kineticBatchReactor\_4}&7105&7105&74869&--&--&18&45586&--&--\\
	\texttt{de063157}&936&1656&5119&36&8463&36&46148&--&--\\
	\texttt{ncvxqp9}&16554&16554&54040&30&33325&30&46556&30&64409\\
	\texttt{lp\_sctap2}&1090&2500&7334&41&29014&40&57920&--&--\\
	\texttt{iiasa}&669&3639&7317&14&11410&14&64331&6&3\\
	\texttt{can\_229}&229&229&1777&38&42965&38&65317&--&--\\
	\texttt{lp\_pilot4}&410&1123&5264&--&--&47&68010&44&688\\
	\texttt{lpi\_pilot4i}&410&1123&5264&--&--&47&70176&44&698\\

	\hline
\end{tabular}
\end{center}
\end{table*}

To emphasize the potential of MP as a benchmark for heuristic solutions, we applied Mondriaan 4.2.1
to the $839$ optimally solved matrices. We used both the underlying hypergraph
partitioner of Mondriaan as well as PaToH 3.2, and used the fine-grain strategy (with and without
iterative refinement) and medium-grain strategy (always uses iterative refinement). For all other
options we relied on the Mondriaan defaults. The average runtime (arithmetic mean) and optimality
ratio (geometric mean, after removing volume-0 matrices, at which point 726 matrices remain)
may be found in \autoref{heur}. Furthermore, \autoref{plotfig2} contains a performance plot of the
six configurations. To answer the question from the introduction `how good are the
current methods?', we note that the geometric average achieved by the combination of Mondriaan
medium-grain with iterative refinement and the PaToH bipartitioner is only 10\% above the optimal
attainable value, meaning that the heuristic partitioning is already very close to what can optimally
be achieved.

\begin{table}[htb!]
\begin{center}
	\resizebox{0.45\textwidth}{!}{
	\begin{tabular}{lcc}
		\hline
		Partitioner&Runtime (in s)&Optimality ratio\\
		\hline
		\hline
		Mondriaan FG	&0.051452	&1.63409\\
		Mondriaan FG+IR	&0.053945	&1.53390\\
		Mondriaan MG	&0.029856	&1.46326\\
		PaToH FG	&0.013915	&1.18534\\
		PaToH FG+IR	&0.015162	&1.16161\\
		PaToH MG	&0.009188	&1.10145\\
		\hline
	\end{tabular}}
\end{center}
	\caption{Results summary heuristic partitioners.}
	\label{heur}
\end{table}

\begin{figure}[htb]
	\centering
	\def\svgwidth{0.5\textwidth}
	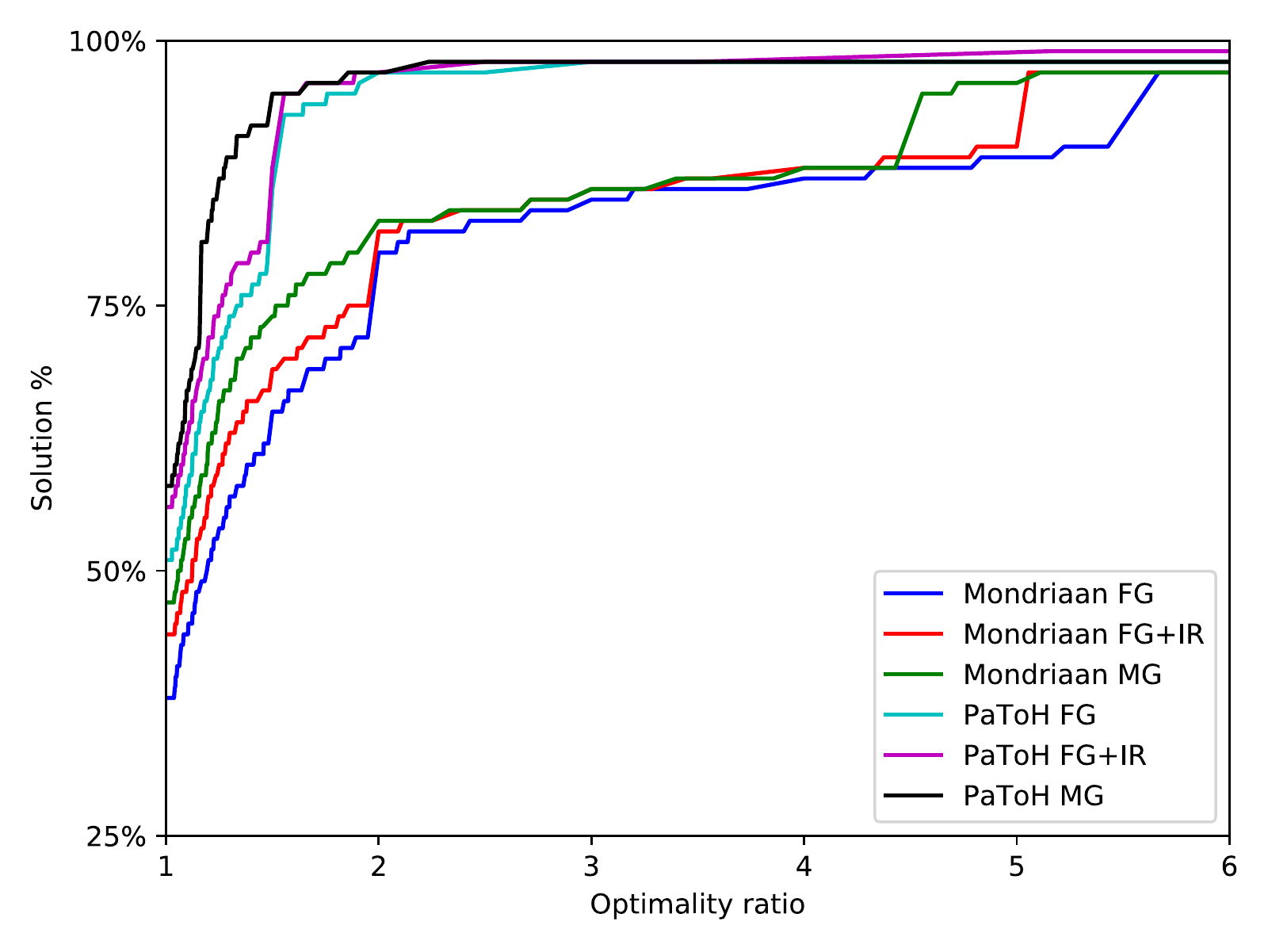
	\caption{For a given ratio $r \geq 1$, we show what percentage of the $839$ matrices
		were solved within a factor $r$ of the optimal solution found by MP, for
		\textbf{\color{blue}Mondriaan fine-grain}, \textbf{\color{red}Mondriaan fine-grain
		with iterative refinement}, \textbf{\color{mgreen}Mondriaan medium-grain},
		\textbf{\color{cyan}PaToH fine-grain}, \textbf{\color{purple}PaToH fine-grain with
		iterative refinement} and finally \textbf{PaToH medium-grain}. For matrices whose
		optimal volume was $0$, we used the convention that $x/0 = 1$ when $x=0$ and
		$\infty$ if $x \geq 1$.}
	\label{plotfig2}
\end{figure}

	\section{Conclusions and future work}
\label{sec:concl}
In this work, we have expanded our data base of 356 optimally bipartitioned sparse matrices to 839 matrices, by
developing a new flow-based bound and a stronger packing bound for our previous branch-and-bound
algorithm~\cite{pelt15}. We implemented this bound in a new matrix partitioner, MP, which has the same
functionality as the previous partitioner MondriaanOpt.
We are now able to bipartition 96.8\% of the sparse matrices with at most 1000 nonzeros from the SuiteSparse
collection~\cite{davis11} to optimality, reaching the exact minimum communication volume for a given load
imbalance $\epsilon=0.03$. For matrices with less than 10,000 nonzeros, we are successful in 72.8\% of the cases,
and for matrices with less than 100,000 nonzeros still in 52.3\%.
The new partitioner MP is more than ten times faster than MondriaanOpt for problems that both partitioners can
solve, and, more importantly, enables us to solve many more partitioning problems.

In the near future, we intend to apply the new partitioner also to selected problems that we could not solve within
our imposed limit of 24 hours. Looking already beyond the horizon, the smallest (by nonzeros) matrix that MP could 
not solve within our limit of one day is \texttt{cage6}. We partitioned this matrix using MP in 283,316 s (over 3
days) on a laptop computer with an Intel i7-8550U 1.8 GHz CPU. By comparison, one of the authors ran MondriaanOpt
for three months on this instance without solving it. The result of the 3-day calculation is shown in
\autoref{fig:cage6}.

In this paper, we also gave a proof of the \NP-completeness of $\epsilon$-balanced sparse matrix bipartitioning.
This result may hardly be surprising, as graph partitioning and hypergraph partitioning are both known to be
\NP-complete. Still, this problem is a very specific instance of hypergraph partitioning, and it is a motivation
for developing good heuristic partitioners to know that solving the problem optimally by an exact algorithm is
intractable. It is our hope that the expanded data base of optimally bipartitioned sparse matrices will be used in
practice to benchmark the quality of the bipartitioning kernel of such heuristic partitioners.

\begin{figure}[htb]
  \centering
  \frame{\includegraphics[width=8cm]{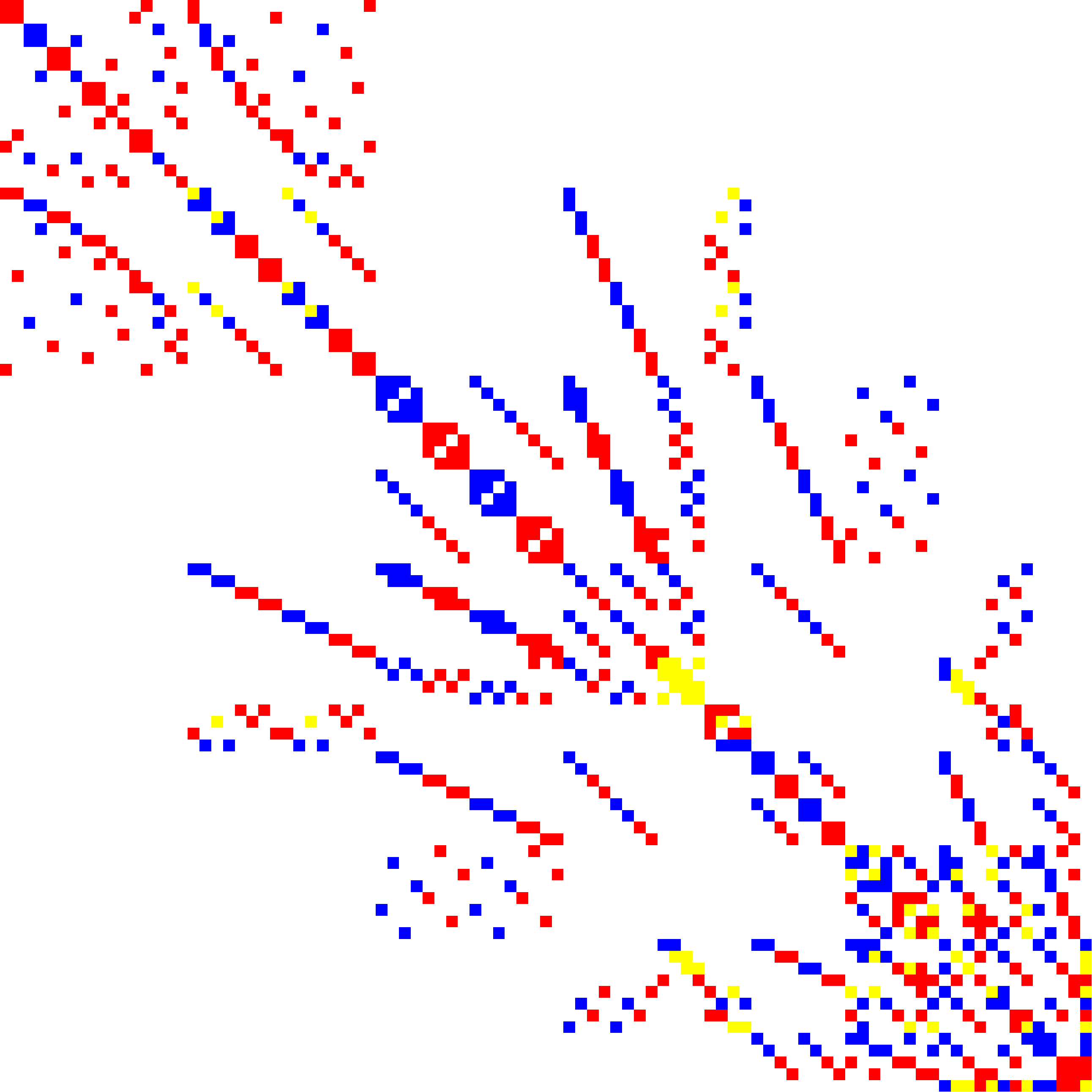}}
  \caption{Bipartitioning of the $93 \times 93 $ matrix \texttt{cage6} with 785 nonzeros.
The minimum communication volume for an allowed imbalance of $\epsilon=0.03$ equals $V=38$. 
The 397 red nonzeros are assigned to one part, the 316 blue nonzeros to the other part,
and the 72 yellow nonzeros can be freely assigned to any part without affecting
the communication volume, because both their row and their column is already cut.
We can color these free nonzeros blue to improve the load balance, giving 397 red and 388 blue nonzeros,
corresponding to an achieved imbalance of about $\epsilon^{\prime}=0.01$.}
\label{fig:cage6}
\end{figure} 

	\section{Acknowledgements}
We thank Oded Schwartz for helpful discussions on \NP-completeness.
The computations of this paper on the Dutch national supercomputer Cartesius
at SURFsara in Amsterdam  were carried out under grant
SH-349-15 from The Netherlands Organisation for Scientific Research (NWO).

	\bibliographystyle{elsarticle-num}
	\bibliography{improved}

\end{document}